\documentclass[a4paper,11pt]{article}
\usepackage[margin=2cm,top=1in,bottom=2cm,a4paper]{geometry}
\usepackage[utf8]{inputenc}
\usepackage{amsmath, amsthm, enumerate, graphics, psfrag, amssymb}
\usepackage{graphicx}
\usepackage{color}
\usepackage{booktabs}
\usepackage{array}
\usepackage{tikz}
\usepackage{hyperref}
\usepackage{cleveref}
\usetikzlibrary{patterns}
\newcommand{\opt}{{\rm OPT}}

\newcommand{\calb}{{{{\cal B}}}}
\newcommand{\calm}{{{{\cal M}}}}
\newcommand{\calp}{{{{\cal P}}}}
\newcommand{\cals}{{{{\cal S}}}}
\newcommand{\area}{{\it a}}
\newcommand{\side}{\mbox{{\rm side}}}
\newcommand{\nfdh}{\mbox{{\rm NFDH}}}
\newcommand{\Oh}{\mathrm{O}}
\newcommand{\falpha}[2]{\alpha \langle {#1}, {#2} \rangle}
\newcommand{\fgamma}[2]{\gamma \langle {#1}, {#2} \rangle}

\newenvironment{pascal}%
         {\begin{list}%
                {}%
                {%
                 \setlength{\partopsep}{1mm}%
                 \setlength{\topsep}{0mm}%
                 \setlength{\parsep}{0mm}%
                 \setlength{\leftmargin}{10mm}%
                }%
          \item}%
         {\end{list}}

\newtheorem{theorem}{Theorem}[section]

\newtheorem{proposition}[theorem]{Proposition}
\newtheorem{lemma}[theorem]{Lemma}
\newtheorem{corollary}[theorem]{Corollary}
\usepackage{authblk}

\usepackage{comment} 

\begin{document}

\title{Improved Upper Bounds for Dynamic Bin Packing of General, Unit-Fraction, 
and Power-Fraction Squares\protect\footnote{\ This research was
  partially supported by
CNPq (Proc.~313146/2022-5, 
311892/2021-3, 
404315/2023-2, 404779/2025-5);  
and FAPESP (Proc.~%
2022/05803-3, 2023/13972-2, 2023/03167-5, 2025/27962-4).
}}
\author[1]{Miguel A. Mini\thanks{\texttt{m.alessandro.mini@gmail.com}}}
\author[1]{Fl\'avio K. Miyazawa\thanks{\texttt{fkm@ic.unicamp.br}}}
\author[1]{Gabriel M. Silva\thanks{Corresponding author: \texttt{g247113@dac.unicamp.br}}}
\author[2]{Yoshiko Wakabayashi\thanks{\texttt{yw@ime.usp.br}}}
\affil[1]{Institute of Computing, University of Campinas}
\affil[2]{Institute of Mathematics, Statistics and Computer Science,  University of S\~ao Paulo}
\date{}
\maketitle

\begin{abstract}
This paper presents significant upper-bound improvements for dynamic 2D square bin packing, where square items arrive and depart over time and the objective is to minimize the peak number of concurrent active unit bins. In our model, repacking is permitted only within a destination bin upon item arrival; migration between active bins is strictly forbidden. By introducing a streamlined two-list algorithm and proving a tight $5/16$ occupied-area bound for Next-Fit Decreasing Height, we reduce the upper bound on the asymptotic competitive ratio for arbitrary squares from $4.2154$ down to $3.918$, breaking a longstanding theoretical ceiling. For restricted variants, we establish asymptotic competitive ratios of at most $3.356$ for unit-fraction side lengths and $2.211$ for power-fraction side lengths.

\smallskip
{\it Keywords:} Dynamic bin packing; square packing; online algorithms; competitive analysis.
 \end{abstract}
 
\section{Introduction}\label{sec:intro}

The dynamic square bin packing problem is an online multidimensional packing problem in which square items arrive and depart over time. The primary objective is to minimize the peak number of active bins required simultaneously throughout the entire packing process. We adopt the model introduced by Epstein and Levy~\cite{EpsteinL10}, where departure times are unknown to the online algorithm and item migration between bins is forbidden. However, upon the arrival of a new item, the algorithm may repack items within their destination bin to exploit space vacated by previous departures, enabling internal restructuring via algorithms such as Next-Fit Decreasing Height ($\nfdh$). Repacking is prohibited upon item departures.

In the one-dimensional dynamic setting, introduced by Coffman, Garey, and Johnson~\cite{CoffmanGJ83}, a modified First-Fit algorithm achieves an asymptotic competitive ratio of at most $2.788$, alongside an online lower bound of $2.388$. They also showed a lower bound of $2.5$ when the offline adversary is allowed to repack items (even moving them to other bins). Chan, Wong, and Yung~\cite{ChanWY09} strengthened this lower bound by showing that $2.5$ holds even against an offline adversary who is not allowed to repack, which Wong, Yung, and Burcea~\cite{wong20128} later improved to $8/3$.

Epstein and Levy~\cite{EpsteinL10} initiated the study of $d$-dimensional dynamic bin packing for~$d \ge 2$. For two-dimensional squares, they gave an algorithm with an asymptotic competitive ratio of $4.2154$ and established a lower bound of $2.2307$. They also derived upper and lower bounds for rectangles ($8.5754$ and $3.7$), cubes ($5.37037$ and $2.117$), and general $d$-dimensional items ($2 \times 3.5^d$ and $d+1$). Subsequently, Wong and Yung~\cite{WongY10} improved the upper bounds for general $d$-dimensional items to $7.788$ ($d=2$), $22.788$ ($d=3$) and $3^d$ ($d\geq 4$), but for general squares, $4.2154$ has remained the best-known upper bound.

A prominent restriction considers \emph{unit-fraction items}, whose side lengths take the form $1/k$ for $k \in \mathbb{Z}_{\ge 1}$. This structure enables tighter density guarantees and often serves as a stepping stone toward the general case. Bar-Noy, Ladner, and Tamir~\cite{BLT2007} analyzed standard online 1D unit-fraction packing, while Chan, Lam, and Wong~\cite{CHAN2008} proved that Any-Fit algorithms are $3$-competitive and First-Fit achieves a ratio of $2.4942$ against a $2.45$ lower bound. Han et al.~\cite{HAN2010} generalized the analytical framework for First-Fit under maximum item size restrictions.

Extending unit-fraction packing to higher dimensions, Burcea, Wong, and Yung~\cite{BWY2013} developed specialized classification schemes for two- and three-dimensional items, including \emph{power-fraction items} with side lengths $2^{-k}$ for $k \in \mathbb{Z}_{\ge 0}$. Their First-Fit strategies yielded competitive ratios of $6.7850$ (2D) and $21.6108$ (3D) for unit-fraction items, and $6.2455$ for 2D power-fraction items. When restricted to unit-fraction squares, their analysis implicitly yields a $3.9654$-competitive algorithm, improving on the $4.2154$ general bound.

\textbf{Contributions.} We present improved online algorithms and tightened competitive bounds for dynamic 2D square packing across three settings:

\begin{itemize}
   \item \textbf{General Squares:} We achieve an asymptotic competitive ratio of at most $3.918$, improving the $4.2154$ bound of Epstein and Levy~\cite{EpsteinL10}. While Epstein and Levy split the item sequence into three sublists, our improvement stems from a streamlined classification into two sublists, paired with a tighter lower bound on bin area occupancy through a refined analysis of $\nfdh$.
   \item \textbf{Unit-Fraction Squares:} We improve the upper bound from $3.9654$ to $3.356$.
   \item \textbf{Power-Fraction Squares:} We improve the competitive ratio from $2.4842$ (inherited from 1D unit-fraction dynamic packing~\cite{HAN2010}) to $2.211$.
\end{itemize}

For both restricted settings, the tighter bounds follow from exact computations of the minimum occupied area in levels that cannot accommodate a given item. As the $2.2307$ lower bound of Epstein and Levy~\cite{EpsteinL10} uses unit-fraction square inputs, it applies directly to both general and unit-fraction variants. Our results reduce the competitive gap from $1.89$ to $1.76$ for general squares, and from $1.78$ to $1.50$ for unit-fraction squares.

\textbf{Organization.} \Cref{sec:prelim} provides technical preliminaries. \Cref{sec:squares} presents the general dynamic square packing algorithm. \Cref{sec:uf,sec:pf} introduce the specialized analyses for unit-fraction and power-fraction items, respectively.

\section{Preliminaries}\label{sec:prelim}

In this section, we formalize the dynamic square bin packing model and establish key notation. Items (squares of side length at most~$1$) arrive and depart at arbitrary, unannounced times. Upon arrival, an item must be immediately packed into a unit bin; upon departure, it is removed from its bin. Crucially, our model permits \emph{intra-bin repacking}: when placing an item into a destination bin, an algorithm may rearrange existing items within that specific bin, but moving items between different bins is strictly prohibited.

To evaluate the performance of the algorithms, we use competitive analysis. Let $\sigma$ be an input sequence of arrivals and departures. A bin is \emph{active} at a given moment if it contains at least one item at that moment. We denote by $\opt(\sigma)$ the maximum number of active bins used at any moment by an optimal offline algorithm, which knows $\sigma$ in advance and may rearrange items arbitrarily across all bins at any time. For an online algorithm $\mathcal{A}$, let $\mathcal{A}(\sigma)$ be the maximum number of active bins used by $\mathcal{A}$ at any point in time. Algorithm $\mathcal{A}$ achieves an \emph{asymptotic competitive ratio} of $c$ if, for every input sequence $\sigma$,
\[
\mathcal{A}(\sigma) \le c \cdot \opt(\sigma) + b,
\]
where $b$ is an absolute constant independent of $\sigma$.

A primary building block in our work is the Next-Fit Decreasing Height ($\nfdh$) algorithm. Given a list $L$ of squares, $\nfdh$ first sorts them in non-increasing order of side length. It then packs the sorted squares side by side along the bottom of a unit bin (of width $1$ and height $1$) until a square no longer fits horizontally. At that point, the algorithm starts a new level along the horizontal line aligned with the top of the first—and tallest—square of the current level, continuing the packing process above this line. If a new level exceeds height $1$, we say that $\nfdh$ \emph{fails} to pack $L$ into a unit bin. 

However, in the analysis, it is convenient to speak of levels even when $\nfdh$ fails to pack a list~$L$.  
For that, consider that we run the procedure above in a vertical strip of width~$1$ and unbounded height, and produce a packing, say $\mathcal{P}(L)$.  
We say that such a packing consists of a sequence of levels called the \emph{level decomposition} of $\mathcal{P}(L)$. Because every square has side length at most $1$, this decomposition is well-defined for every list $L$. Consequently, $\nfdh$ packs $L$ into a single unit bin if and only if the heights of the levels in its level decomposition sum to at most~$1$.

Throughout this paper, $\area(x)$ denotes the area of an item $x$, and for any set $X$ of items, $\area(X) = \sum_{x \in X} \area(x)$. At any point in time, $\opt(\sigma)$ is clearly bounded below by the total area of all active items present. Henceforth, whenever we refer to a bin, it means a unit bin.

The following lemma will be used across the paper:


\begin{lemma}\label[lemma]{lemma:mono}
  Let $X = \{x_1,\dots,x_n\}$ and $Y=\{y_1,\dots, y_m\}$ be multisets of squares with side lengths $1 \ge \side(x_1) \ge \side(x_2) \ge \cdots \ge \side(x_n)$ and $1 \ge \side(y_1) \ge \side(y_2) \ge \cdots \ge \side(y_m)$. If $n \le m$, $\side(x_i) \le \side(y_i)$ for $i = 1, \ldots, n$, and $\nfdh$ can pack $Y$ into a unit bin, then $\nfdh$ can pack $X$ into a unit bin.
\end{lemma}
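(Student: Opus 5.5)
The plan is a monotonicity argument on the level decomposition: I will compare the NFDH run on $X$ with the run on $Y$ level by level and show the total height for $X$ is at most that for $Y$. First a harmless reduction. Since $X$ and $Y$ are already sorted non-increasingly, NFDH processes $x_i$ and $y_i$ at the same step $i$. Deleting $y_{n+1},\dots,y_m$ from the end of $Y$ only removes items from the last levels of $\mathcal{P}(Y)$. So the truncated list is still packed into a unit bin, and we may assume $m=n$.

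Next I compare the two runs by where levels start. Let $s_1^X<s_2^X<\cdots$ be the indices at which NFDH opens levels of $\mathcal{P}(X)$, and likewise $s_j^Y$ for $\mathcal{P}(Y)$. The level with first item $x_{s}$ has height $\side(x_s)$. The key claim is $s_j^X \ge s_j^Y$ for every $j$ up to the number of levels of $\mathcal{P}(X)$. Equivalently, the $j$-th level of $X$ starts no earlier in the list than the $j$-th level of $Y$.

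I prove the claim by induction on $j$, and this step is the main obstacle. Suppose $s_j^X \ge s_j^Y$. The $j$-th level of $Y$ contains $y_{s_j^Y},\dots,y_{s_{j+1}^Y-1}$, and this consecutive block has total width at most $1$. The $j$-th level of $X$ greedily takes the consecutive block starting at $s_j^X$ for as long as the widths sum to at most $1$. If $s_j^X \ge s_{j+1}^Y$, then $s_{j+1}^X > s_j^X \ge s_{j+1}^Y$ and we are done. Otherwise the indices $s_j^X,\dots,s_{j+1}^Y-1$ lie inside the block $[s_j^Y, s_{j+1}^Y-1]$. Using $\side(x_i)\le\side(y_i)$, their total width is at most
\[
\sum_{i=s_j^Y}^{s_{j+1}^Y-1}\side(y_i)\le 1 .
\]
So all these $x$'s fit on level $j$ of $X$, which gives $s_{j+1}^X\ge s_{j+1}^Y$. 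Care is needed at the end of the lists: if $X$ runs out of items before $Y$ opens level $j+1$, then $X$ has no level $j+1$ and there is nothing to prove. As a consequence, $\mathcal{P}(X)$ has at most as many levels as $\mathcal{P}(Y)$.

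Finally I sum the level heights. Because the sides are non-increasing and $s_j^X\ge s_j^Y$, each level satisfies $\side(x_{s_j^X})\le \side(x_{s_j^Y})\le \side(y_{s_j^Y})$. Summing over the levels of $X$, and using that $\mathcal{P}(X)$ has no more levels than $\mathcal{P}(Y)$, the total height of $\mathcal{P}(X)$ is at most that of $\mathcal{P}(Y)$, which is at most $1$. By the characterization of NFDH success through the level decomposition, NFDH packs $X$ into a unit bin.
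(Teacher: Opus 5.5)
Your proof is correct and follows essentially the same argument as the paper. Your invariant $s_j^X \ge s_j^Y$ is exactly the paper's $s_i \le r_i$ on cumulative level counts, and you finish with the same level-by-level height comparison. The only differences are minor. You run the inductive step from the $Y$ side (level $j$ of $Y$ fits, so the overlapping, smaller $x$'s fit), whereas the paper runs it from the $X$ side (the next $x$ overflows, so the larger $y$'s overflow too). You also truncate $Y$ to length $n$ to handle end effects.
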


\begin{proof}
 Let $\mathcal{P}(X)$ (resp. $\mathcal{P}(Y)$) be the $\nfdh$ packing of $X$ (resp.\ $Y$) into a vertical strip of width~1 (resp. unit bin). For $i \ge 0$, let $r_i$ (resp.\ $s_i$) be the number of squares in the first $i$ levels of $\mathcal{P}(X)$ (resp. $\mathcal{P}(Y)$), with $r_0 = s_0 = 0$. Then the $(i+1)$-st level of $\mathcal{P}(X)$ has height $\side(x_{r_i+1})$.

 Let $k$ be the total number of levels of $\mathcal{P}(X)$. We may assume $k\ge 2$, otherwise, $X$ can be packed into a unit bin.  We claim that $s_i \le r_i$ for $i = 0,\dots, k - 1$.

  The claim holds for $i = 0$. Assume $s_i \le r_i$ for some $i \le k - 2$, and let $d := r_{i+1} - r_i$ be the number of squares in the $(i+1)$-st level of $\mathcal{P}(X)$. Because $i+1 \le k - 1$, the $(i+1)$-st level is not the last level of $\mathcal{P}(X)$, so $x_{r_{i+1}+1}$ exists and fails to fit in that level. Thus, $\sum_{j=1}^{d+1} \side(x_{r_i+j}) > 1$. Since side lengths are non-increasing and $s_i \le r_i$, we have $\side(y_{s_i+j}) \ge \side(y_{r_i+j}) \ge \side(x_{r_i+j})$ for every $j \ge 1$, which implies $\sum_{j=1}^{d+1} \side(y_{s_i+j}) > 1$. Hence, the $(i+1)$-st level of $\mathcal{P}(Y)$ contains at most $d$ squares, giving $s_{i+1}\le s_i + d \le r_i + d = r_{i+1}$, which completes the induction.

Since $s_{k-1} \le r_{k-1} < n \le m$, the first $k-1$ levels of $\mathcal{P}(Y)$ do not contain all items of $Y$, so $\mathcal{P}(Y)$ has at least $k$ levels. Using the hypotheses on the  side lengths, and the fact that $\nfdh$ packs $Y$ into a unit bin, we have 
\[
  \sum_{i=0}^{k-1} \side(x_{r_i+1})
  \;\le\; \sum_{i=0}^{k-1} \side(y_{r_i+1})
  \;\le\; \sum_{i=0}^{k-1} \side(y_{s_i+1})
  \;\le\; 1 .
\]
The first summation corresponds to the total height of the packing $\mathcal{P}(X)$, so $\nfdh$ packs $X$ into a unit bin.
\end{proof}


We obtain two immediate corollaries.

\begin{corollary}\label[corollary]{cor:enlarge}
Let $X$ be a multiset of squares that $\nfdh$ fails to pack into a single bin, let $x \in X$, and let $y$ be a square with $\side(y) \ge \side(x)$.  Then $\nfdh$ fails to pack $Y = (X \setminus \{x\}) \cup \{y\}$ into a single bin.
\end{corollary}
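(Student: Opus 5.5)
This follows from \Cref{lemma:mono} by contraposition. Suppose, for contradiction, that $\nfdh$ packs $Y$ into a single bin. We will check that the hypotheses of \Cref{lemma:mono} hold with $X$ as the smaller list and $Y$ as the larger one. The lemma then says that $\nfdh$ packs $X$ into a single bin, which contradicts the assumption.

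Let $n=|X|=|Y|$, so the size condition $n\le m$ of the lemma holds with equality. Sort both multisets in non-increasing order of side length, as $x_1,\dots,x_n$ and $y_1,\dots,y_n$. The only point that needs an argument is that $\side(x_i)\le\side(y_i)$ for every $i$.

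For this, compare the $i$-th order statistics through counting. For any threshold $t$, replacing $x$ by the larger square $y$ cannot decrease the number of squares with side length at least $t$. Hence $|\{z\in Y:\side(z)\ge t\}|\ge|\{z\in X:\side(z)\ge t\}|$ for every $t$. Now take $t=\side(x_i)$. At least $i$ elements of $X$ have side length at least $t$, so at least $i$ elements of $Y$ do as well, and therefore $\side(y_i)\ge t=\side(x_i)$.

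With the pointwise domination established, \Cref{lemma:mono} applies and yields the contradiction. This counting step is the only mild obstacle, and it is routine.
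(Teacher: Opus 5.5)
Your proof is correct and is essentially the paper's argument: assume $\nfdh$ packs $Y$, then apply \Cref{lemma:mono} with $Y$ as the larger list to conclude that $\nfdh$ packs $X$, which is a contradiction. The paper takes the domination $\side(x_i)\le\side(y_i)$ of the sorted sequences for granted, so your counting argument supplies a step the paper leaves implicit.
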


\begin{proof}
We prove the contrapositive. Suppose $\nfdh$ packs $Y$ into a single bin. Since $\side(x) \le \side(y)$, $X$ is obtained from $Y$ by replacing $y$ with a smaller or equal item $x$. By \Cref{lemma:mono}, $\nfdh$ can pack $X$ into a single bin, contradicting the hypothesis.
\end{proof}

\begin{corollary}\label[corollary]{cor:subset}
Let $X$ be a multiset of squares that $\nfdh$ packs into a single bin. Then $\nfdh$ packs every $X' \subseteq X$ into a single bin.
\end{corollary}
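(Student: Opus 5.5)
This corollary follows directly from \Cref{lemma:mono}, applied with $Y := X$ and $X := X'$. Write the squares of $X$ as $y_1,\dots,y_m$ and those of $X'$ as $x_1,\dots,x_n$, each sorted by non-increasing side length. Since $X'$ is a sub-multiset of $X$, we have $n \le m$. The only hypothesis that needs checking is the coordinatewise domination $\side(x_i) \le \side(y_i)$ for $i=1,\dots,n$.

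To prove the domination, fix $i \le n$. The $i$ largest elements $x_1,\dots,x_i$ of $X'$ are $i$ distinct members of the multiset $X$, and each has side length at least $\side(x_i)$. So $X$ contains at least $i$ elements of side length $\ge \side(x_i)$. Because $y_1,\dots,y_m$ are sorted non-increasingly, its $i$-th largest element satisfies $\side(y_i) \ge \side(x_i)$.

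We are given that $\nfdh$ packs $X$ into a unit bin, so every hypothesis of \Cref{lemma:mono} holds. The lemma then gives that $\nfdh$ packs $X'$ into a unit bin. The case $X' = \emptyset$ is trivial.

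The only step needing any care is the domination argument with multiplicities, which the counting of $i$ distinct members above handles. I expect no real obstacle.
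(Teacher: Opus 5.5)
Your proof is correct and matches the paper's argument: apply \Cref{lemma:mono} with $X$ in the role of $Y$ and $X'$ in the role of $X$, using $|X'| \le |X|$ and the coordinatewise domination of the sorted side lengths. Your counting argument for the domination (the $i$ largest members of $X'$ are $i$ members of $X$ with side at least $\side(x_i)$) simply spells out the step the paper states in one line.
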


\begin{proof}
    This follows directly from \Cref{lemma:mono} with $X$ playing the role of $Y$ and $X'$ the role of $X$. Its hypotheses hold: $|X'| \le  |X|$, and since $X' \subseteq X$, the $i$-th largest side in $X'$ is at most the $i$-th largest side in $X$ for every $i$.
\end{proof}

Note that, by this last corollary, any algorithm that adds a square to a bin only when $\nfdh$ succeeds maintains the invariant that each of its bins can be packed by $\nfdh$: the invariant is preserved on an arrival by construction, and on a departure because the remaining squares form a subset of a multiset that $\nfdh$ packs, so \Cref{cor:subset} applies.

\section{Dynamic Square Bin Packing}\label{sec:squares}  

We now introduce our proposed algorithm, which we denote by SMB, a modification of the algorithm SQP, presented by Epstein and Levy~\cite{EpsteinL10}.  While the algorithm SQP subdivides the input items into three lists, the algorithm SMB subdivides the input items into two lists: a list $L_1$ consisting of {\it small} or {\it medium} squares, and a list $L_2$ consisting solely of {\it big} squares.  We say that a square is {\it small} if its side is at most $\frac{1}{3}$, it is {\it medium} if its side is larger than $\frac{1}{3}$ and at most $\frac{1}{2}$, and it is {\it big} if its side is larger than $\frac{1}{2}$.

There is a unique (optimal) way to pack the list $L_2$, since a bin holds at most one big square. As we will see, our algorithm opens a new bin only when all previously opened bins are occupied, so if it opens $N$ bins there is a moment at which $N$ big squares are present, and any packing uses at least $N$ bins at that moment. Hence, the algorithm for $L_2$ has an asymptotic competitive ratio of 1.

It then remains to give an algorithm for $L_1$. We will denote by SM the algorithm to pack $L_1$, to be described in what follows.

Roughly speaking, the algorithm SM manages two classes of bins: a class $\cals$ consisting of those bins which contain or contained (at some moment) at least one small square, and a class $\calm$ consisting of the remaining bins (that is, bins which do not contain and never contained a small square). The idea behind this algorithm is to pack a small (resp.\ medium) square ---with the $\nfdh$ algorithm--- giving priority to an existing bin in $\cals$ (resp.\ $\calm$).  When such a packing is not possible, then a new bin is opened.

As will become clear from the description below, we note that a bin that initially belongs to $\calm$ may receive a small square and, when this happens, it becomes a bin of $\cals$ and remains so (even if later all its small squares depart). Thus, ``belongs to $\cals$'' is a permanent status, but ``belongs to $\calm$'' may be a temporary status for a bin.

Throughout this section, ``packing a square $s$ into bin $B$ with $\nfdh$'' means applying $\nfdh$ to the multiset consisting of $s$ together with the squares currently in $B$. If $\nfdh$ can successfully pack the items into one bin, we say that it succeeds; otherwise, it fails. Note that this is permitted by the intra-bin repacking rule.

\medskip

\begin{pascal}
\hrule\vspace{1mm}
{\bf Algorithm} SM
\vspace{1mm}
\hrule\vspace{1mm}
{\bf Input:} A sequence $L$ of arriving or departing squares, each of side at most $\frac{1}{2}$.

{\bf Output:} For each arriving square, the bin in which it is allocated. 

\vspace*{2mm}

\item[1]\quad $\cals \leftarrow \emptyset$\,; $\calm \leftarrow \emptyset$. 

\item[2]\quad \textbf{while} the sequence $L$ is not finished
\item[3]
\quad\quad\quad let $s$ be the next square in the sequence $L$. 

\item[4]
\quad\quad\quad \textbf{if} $s$ is a departing square, \textbf{then} remove $s$ from the bin in which $s$ is placed.

\item[5]
\quad\quad\quad \textbf{if} $s$ is an arriving square, 

\item[6]
\quad\quad\quad\quad \textbf{if} $s$ is {\it small}, \textbf{then} ---if possible--- pack $s$ with the
$\nfdh$ algorithm into a bin $B$ \\
\hspace*{15.5mm} in $\cals \cup \calm$, giving priority to a bin in $\cals$. 
If $B$ is a bin in $\calm$, then remove $B$ from \\
\hspace*{15.5mm} $\calm$ and set $\cals\leftarrow \cals\cup\{B\}$. 
If $\nfdh$ fails to pack $s$ into a bin in $\cals\cup\calm$, 
then open \\
\hspace*{15.5mm} a new bin~$B$, pack $s$ into $B$ and set $\cals \leftarrow \cals \cup \{B\}$.
Output $B$. 

\item[7]
\quad\quad\quad\quad \textbf{if} $s$ is {\it medium}, \textbf{then} ---if possible--- 
 pack $s$ with the $\nfdh$ algorithm into a bin \\
 \hspace*{15.5mm} in $\calm \cup \cals$, giving priority to a bin in $\calm$.  
   If this is not possible,  open a new bin~$B$, \\
 \hspace*{15.5mm} pack $s$ into $B$ and set $\calm\leftarrow \calm \cup\{B\}$. Output $B$. 

\vspace{2mm}
\hrule\vspace{1mm}
\end{pascal}

\medskip

We use the following theorem and lemma:
\begin{theorem}[Meir and Moser~\cite{MeirM68}]\label{NewRef1}
  Every multiset $X$ of squares whose largest square has side $\ell$ can be packed by \nfdh\ in a rectangle with dimension $(b,h)$ if $\ell \leq \min\{b, h\}$ and $\area(X) \leq \ell^2 + (b-\ell)(h-\ell)$.
\end{theorem}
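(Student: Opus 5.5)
We prove the Meir--Moser bound by contradiction: assume $\nfdh$, run in the rectangle of width $b$, fails to fit inside height $h$, and show that the first $k+1$ levels then hold more area than $\ell^2+(b-\ell)(h-\ell)$. Sort $X$ as $x_1,\dots,x_n$ with sides $s_1=\ell\ge s_2\ge\cdots\ge s_n$. Run $\nfdh$ in a strip of width $b$ and unbounded height, obtaining levels $1,2,\dots$ with heights $h_1=\ell\ge h_2\ge\cdots$, where $h_i$ is the side of the first square of level $i$. Suppose the total height exceeds $h$, so there is $k$ with $h_1+\cdots+h_k\le h<h_1+\cdots+h_{k+1}$. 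We then show $\area(X)>\ell^2+(b-\ell)(h-\ell)$. (If $n$ squares do not fit even in one level while $h\ge \ell$, the same argument applies with $k\ge 1$, since $h_1=\ell\le h$.)

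\textbf{Area per level.} Let $w_i$ be the total width of the squares in level $i$. For $i\le k$, level $i$ is closed because the first square of level $i+1$, of side $h_{i+1}$, did not fit, so $w_i+h_{i+1}>b$. Every square in level $i$ has side at least $h_{i+1}$, since the list is sorted and the next item is the first of level $i+1$. Hence
\[
\area(\text{level } i)\ \ge\ h_{i+1}\,w_i\ >\ h_{i+1}(b-h_{i+1}).
\]
We want the cleaner bound $\area(\text{level } i)\ge h_{i+1}(b-\ell)$. To get it, use $w_i>b-h_{i+1}\ge b-\ell$. Alternatively, keep the sharper first square: level $i$ contains one square of side $h_i$ together with others of side at least $h_{i+1}$. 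Then
\[
\area(\text{level } i)\ \ge\ h_i^2+h_{i+1}(w_i-h_i)\ \ge\ h_i^2+h_{i+1}(b-h_{i+1}-h_i).
\]

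\textbf{Summing.} The simple bound gives
\[
\area(X)\ \ge\ \sum_{i=1}^{k}\area(\text{level } i)\ >\ (b-\ell)\sum_{i=2}^{k+1}h_i\ >\ (b-\ell)(h-\ell),
\]
using $\sum_{i=1}^{k+1}h_i>h$ and $h_1=\ell$. This yields $(b-\ell)(h-\ell)$ but still lacks the $\ell^2$ term. To gain $\ell^2$, treat level~1 specially. It contains $x_1$ of area $\ell^2$, and its remaining width $w_1-\ell$ is filled by squares of side at least $h_2$, so $\area(\text{level }1)\ge \ell^2+h_2(w_1-\ell)$. Level $k+1$ also contains at least its first square, of area $h_{k+1}^2$. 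The key comparison is then
\[
\ell^2+h_2(w_1-\ell)+\sum_{i=2}^{k}h_{i+1}w_i+h_{k+1}^2\ \ \text{versus}\ \ \ell^2+(b-\ell)\sum_{i=2}^{k+1}h_i .
\]
Since $w_i>b-h_{i+1}$, the left side exceeds $\ell^2+\sum_{i=1}^{k}h_{i+1}(b-h_{i+1})-h_2\ell+h_2h_2+\dots$. After reindexing, the loss relative to $(b-\ell)h_{i+1}$ at each level is $h_{i+1}(\ell-h_{i+1})$. This loss must be absorbed.

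\textbf{Main obstacle.} The hard step is absorbing these losses $\sum h_{i+1}(\ell-h_{i+1})$ using the slack $\sum_{i=1}^{k+1}h_i-h>0$ and the last-level square $h_{k+1}^2$. First I would use the better per-level bound, which charges each level's leading square $h_i^2$ rather than $h_{i+1}h_i$. Each level then gives $h_i^2-h_ih_{i+1}+h_{i+1}(b-h_{i+1})$, and the sum telescopes against $\ell^2=h_1^2$. If the bookkeeping still falls short, the fallback is an induction on $k$: cut off the first level, apply the claim to the rectangle $(b,h-\ell)$ with largest side $h_2$, and verify
\[
\ell^2+h_2(b-\ell-h_2)+h_2^2+(b-h_2)(h-\ell-h_2)\ \ge\ \ell^2+(b-\ell)(h-\ell).
\]
This reduces to $h_2(\ell-h_2)\ge 0$ after expansion, so the induction step succeeds. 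The base case, where everything must fit in a single level, follows from the width-closing inequality.
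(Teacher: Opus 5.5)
Since the paper only cites Meir and Moser for this theorem and gives no proof, your argument has to stand on its own. The strategy is sound, but the computation you use to close it is wrong as stated. In the induction fallback the expansion is
\[
\ell^2+h_2(b-\ell-h_2)+h_2^2+(b-h_2)(h-\ell-h_2)-\bigl(\ell^2+(b-\ell)(h-\ell)\bigr)=(\ell-h_2)(h-\ell-h_2),
\]
not $h_2(\ell-h_2)$. This quantity is negative whenever $h-\ell<h_2<\ell$.

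What rescues the step is a hypothesis you never check. You may apply the claim to $X^-$ in the rectangle $(b,h-\ell)$ only if its largest side satisfies $h_2\le\min\{b,h-\ell\}$. The condition $h_2\le h-\ell$ says exactly that at least two levels fit ($k\ge 2$), and under it the factor above is nonnegative.

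The complementary case $h_2>h-\ell$ (that is, $k=1$) is your base case. There the width-closing inequality for level~1 is not enough on its own; you also need the area of the first square of level~2:
\[
\area(X)>\ell^2+h_2(b-\ell-h_2)+h_2^2=\ell^2+h_2(b-\ell)\ge\ell^2+(b-\ell)(h-\ell).
\]

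Your primary route also works and needs no induction, but you left it unverified. The crude bookkeeping does fall short. The terms $h_{i+1}(\ell-h_{i+1})$ are surpluses, not losses; the real deficit is the $-h_2^2$ coming from level~1. For $k=2$, $\ell=h_2=\tfrac12$, $h_3=\tfrac14$, the net against $\ell^2+(b-\ell)\sum_{i=2}^{k+1}h_i$ is $-h_2^2+h_3\ell=-\tfrac18$, and the slack $\sum_{i=1}^{k+1}h_i-h$ can be arbitrarily small. So charging every leading square in full is necessary.

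With $\area(\text{level } i)\ge h_i^2+h_{i+1}(w_i-h_i)>h_i^2-h_ih_{i+1}+h_{i+1}(b-h_{i+1})$ for $i\le k$, and using $h_i\le\ell$, you get
\begin{align*}
\sum_{i=1}^{k}\area(\text{level } i)
&>\sum_{i=1}^{k}\bigl(h_i^2-h_{i+1}^2\bigr)+\sum_{i=1}^{k}h_{i+1}(b-h_i)\\
&\ge \ell^2-h_{k+1}^2+(b-\ell)\sum_{i=2}^{k+1}h_i .
\end{align*}
Now add the area $h_{k+1}^2$ of the first square of level $k+1$. Since $\sum_{i=2}^{k+1}h_i>h-\ell$ and $b\ge\ell$, this gives $\area(X)>\ell^2+(b-\ell)(h-\ell)$. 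Either repair completes the proof.
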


\begin{lemma}[Fernandes et al.~\cite{FernandesFMW19}]\label{lequatronono}
  Let $y$ be a small square and $X$ be a multiset of small or medium squares.  If $\nfdh$ cannot pack $X\cup\{y\}$ in one bin, then $\area(X)\geq \frac{4}{9}$.
\end{lemma}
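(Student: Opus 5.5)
We prove the statement in its contrapositive form: if $\area(X) < \frac{4}{9}$, then $\nfdh$ packs $X \cup \{y\}$ into one bin. Write $Z = X \cup \{y\}$ and let $\ell$ be the largest side in $Z$. Every square is small or medium, so $\ell \le \frac12$. Also $y$ is small, so $\area(y) \le \frac19$ and
\[
\area(Z) < \tfrac49 + \tfrac19 = \tfrac59 .
\]
The plan is to split on whether $Z$ contains a medium square and to apply \Cref{NewRef1} in each case. Where a direct application does not reach the needed bound, we fall back on an explicit analysis of the $\nfdh$ levels.

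\textbf{Case 1: all squares of $Z$ are small,} so $\ell \le \frac13$. \Cref{NewRef1} with $b = h = 1$ guarantees a packing whenever $\area(Z) \le \ell^2 + (1-\ell)^2$. On $[0,\frac13]$ this bound is at least $\frac19 + \frac49 = \frac59$, with the minimum at $\ell = \frac13$. Since $\area(Z) < \frac59$, the packing exists and this case is done.

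\textbf{Case 2: $\ell \in (\frac13, \frac12]$.} Here \Cref{NewRef1} alone gives only the threshold $\ell^2 + (1-\ell)^2 \ge \frac12$, which falls short of $\frac59$. We therefore argue directly on the strip packing $\mathcal{P}(Z)$.
\begin{itemize}
\item The medium squares all come first in the sorted order. At most two fit side by side in a level, since three would have total width greater than $1$.
\item Suppose $\nfdh$ fails, so the level heights $h_1 \ge h_2 \ge \cdots$ sum to more than $1$.
\item Use the standard $\nfdh$ estimate: every non-final level $i$ has area at least $h_{i+1}(1 - h_{i+1})$. This holds because its squares have side at least $h_{i+1}$ and the next square, of side $h_{i+1}$, did not fit, so the level's total width exceeds $1 - h_{i+1}$.
\item Levels containing two medium squares have area greater than $2 \cdot \frac19$ each.
\end{itemize}

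The key step is to show that the area of $X = Z \setminus \{y\}$ is at least $\frac49$. We take $y$ to be the last square, which is harmless by \Cref{cor:enlarge} and monotonicity, since moving to the smallest small square only makes the bound harder. Summing the per-level bounds over the levels while discounting one small square then yields the claim. I expect this level-by-level accounting to be the main obstacle.

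Within the accounting, the delicate configurations are those in which levels of medium squares are only half full, for example a single medium square of side close to $\frac12$ alone in its level. To handle these, I would first try splitting the bin horizontally at height $h_1$ and applying \Cref{NewRef1} to the remaining rectangle $(1, 1 - h_1)$, which contains the remaining squares. I would then optimize the resulting bound over $h_1 \in (\frac13, \frac12]$ and check that it stays at least $\frac49$ after removing $\area(y)$.
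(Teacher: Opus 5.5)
Your Case~1 is correct, but the proposal stops where the lemma has content. For $\frac13<\ell\le\frac12$ you give only a plan, and the one concrete reduction you state there goes the wrong way. \Cref{cor:enlarge} lets you \emph{enlarge} $y$ while preserving failure; it does not let you shrink it, since a smaller $y$ may let $\nfdh$ succeed. Discounting the smallest square of $Z$ is also the wrong direction: it bounds $\area(Z)-\min_{z\in Z}\area(z)$, which is at least $\area(Z)-\area(y)=\area(X)$, so it gives no lower bound on $\area(X)$. The correct normalization is $\side(y)=\frac13$ with $X$ unchanged, so that you must show $\area(Z)\ge\frac59$.

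This normalization matters. Suppose you use only $h_1$ and the side $x$ of the first square of level~2, bounding level~1 by $h_1^2+x\max\{x,1-h_1-x\}$ and the rest by the \Cref{NewRef1} bound $x^2+(1-x)(1-h_1-x)$. At $h_1=0.4$, $x=0.3$ this totals $0.55<\frac59$. So optimizing over $h_1$ alone cannot close the case; you need to know what level~1 contains.

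Your splitting idea does work once you combine it with this normalization and split on the number $m$ of medium squares in $X$.
\begin{itemize}
\item If $m\ge 4$ then $\area(X)>\frac49$ trivially, and $m=0$ is your Case~1.
\item Otherwise let $a$ be the largest side. Levels $2,3,\dots$ exist (else the height is $a\le\frac12$), they form the $\nfdh$ packing of the remaining squares, and they fail in the rectangle $(1,1-a)$. Let $x\le\frac12\le 1-a$ be the side of the first square of level~2. By \Cref{NewRef1} the area of these levels exceeds $\phi(x)=x^2+(1-x)(1-a-x)$, and $\phi'(x)=4x+a-2<0$ for $x\le\frac13$.
\item If $m=1$, level~1 contains $a$ and a square of side $\frac13$ (it fits, as $a+\frac13<1$), and $x\le\frac13$. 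Hence $\area(X)>a^2+\frac19+\phi(\frac13)-\frac19=(a-\frac13)^2+\frac49$.
\item If $m=2$, with medium sides $a\ge b$, level~1 is exactly $\{a,b\}$ and $x=\frac13$. Hence $\area(X)>a^2+b^2+\phi(\frac13)-\frac19>(a-\frac13)^2+\frac49$.
\item If $m=3$, with medium sides $a\ge b\ge c$, level~1 is $\{a,b\}$ and $x=c$. Hence $\area(X)>a^2+b^2+c^2+(1-c)(1-a-c)-\frac19$. This increases in $a$ and $b$ (since $2a+c>1$), so it is at least $5c^2-3c+\frac89=\frac49+(c-\frac13)(5c-\frac43)>\frac49$.
\end{itemize}
The configuration you flag as delicate, a medium square alone in a non-final level, cannot occur, since any second square fits beside it.
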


The next lemma will be useful to prove that the asymptotic competitive ratio of algorithm SM is at most $2.918$.

\begin{lemma}\label{lecincodezesseis} 
  Let $X\cup \{y\}$ be a multiset of squares with side at most $\frac{1}{2}$. If $\nfdh$ can pack $X$ in one bin, but cannot pack $X\cup\{y\}$ in one bin, then $\area(X) > \frac{5}{16}$.
\end{lemma}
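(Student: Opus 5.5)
We argue by contradiction on the level structure of the $\nfdh$ packing of $X\cup\{y\}$, reducing to a worst case via \Cref{cor:enlarge}. Let $Z=X\cup\{y\}$ and let $\mathcal{P}(Z)$ be its level decomposition in the strip, with levels $\mathcal{L}_1,\dots,\mathcal{L}_k$ of heights $h_1\ge h_2\ge\cdots\ge h_k$. Here $h_i$ is the side of the first square of $\mathcal{L}_i$, and $\sum_i h_i>1$ since $\nfdh$ fails on $Z$. All sides are at most $\tfrac12$, so $h_1\le \tfrac12$ and $k\ge 3$.

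The basic estimate is the standard $\nfdh$ level argument. For $i<k$, the first square of $\mathcal{L}_{i+1}$ did not fit in $\mathcal{L}_i$, so the total width of $\mathcal{L}_i$ exceeds $1-h_{i+1}$. Every square in $\mathcal{L}_i$ has side at least $h_{i+1}$. Together these give $\area(\mathcal{L}_i)\ge h_{i+1}(1-h_{i+1})$, and also $\area(\mathcal{L}_i)\ge h_{i+1}\cdot(\text{width of }\mathcal{L}_i)$, which exceeds $h_{i+1}(1-h_{i+1})$. Summing over $i<k$ gives
\[
\area(Z)\;>\;\sum_{i=2}^{k} h_i(1-h_i).
\]
We then remove $y$. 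We get $\area(X)=\area(Z)-\side(y)^2$. A sharper version accounts for where $y$ sits: it lies in some level $\mathcal{L}_j$, and removing it reduces that level's width.

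The constant $5/16$ suggests the extremal configuration. Take $X$ with level heights $\tfrac12,\tfrac14+\varepsilon,\tfrac14-\varepsilon,\ldots$, or with levels of squares slightly above $1/4$. For example, two rows of three squares of side $\approx 1/4$ give area $\approx 6/16$; one half-square plus squares of side $1/4$ gives $1/4+1/16=5/16$. I would first run \Cref{cor:enlarge} in reverse. By \Cref{lemma:mono}, we may round sides in $X$ down to reduce area while keeping $X$ packable and $Z$ unpackable. We may also assume $\side(y)$ is as large as possible, namely $\tfrac12$, since enlarging $y$ preserves failure by \Cref{cor:enlarge} and does not change $\area(X)$. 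With $\side(y)=\tfrac12$, $y$ is the first square of $Z$, so the levels of $Z$ are $y$ followed by the levels of $X$ shifted.

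With $y$ now of side $\tfrac12$, failure means the levels of $X$ (in $\nfdh$ order, after $y$ starts a level) have heights summing to more than $\tfrac12$. I would then case-split on $h_1'=\side(x_1)$, the largest side in $X$.

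\textbf{Case 1: $h_1'>1/4$.} Squares larger than $\tfrac14$ go at most three per row. $X$ fits in one bin, but $y$ together with $X$ does not. The row $\mathcal{L}_1$ containing $y$ also contains $X$-squares, and its width exceeds $1-h_2$. I would bound the area contributed by the $X$-squares in the first two levels by explicit optimization, using that at most one more small level can exist.

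\textbf{Case 2: $h_1'\le 1/4$.} Apply the level bound $\sum h_{i+1}(1-h_{i+1})$ over the levels of $X$, together with $\sum h_i>\tfrac12$. Since $h(1-h)\ge \tfrac34 h$ for $h\le\tfrac14$, summing gives roughly $\tfrac34\cdot(\tfrac12-h_1')$ plus the first level's contribution. Adding the first $X$ level, which contains squares of total width near $1$ when the next level is small, recovers at least $5/16$.

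The main obstacle is the careful accounting at the first and last levels. The first level of $Z$ is shared between $y$ and $X$-items and so loses area. The last level contributes only its height, not its area. Handling these boundary terms tightly enough to reach exactly $5/16$ (with strict inequality) is where the casework will concentrate. I would try first to show the worst case has $X$ consisting of the squares in the row of $y$ plus one further row of side just above $1/4$, and compute directly.
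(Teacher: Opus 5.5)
Your proposal has a genuine gap. Neither case is actually carried out, and the only quantitative estimate you state is too weak to reach $\frac{5}{16}$.

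The per-level bound $\area(\mathcal{L}_i)>h_{i+1}(1-h_{i+1})$ throws away the extra area $h_i(h_i-h_{i+1})$ of each level's leading square, and this loss is not a negligible boundary effect. Take any $h_2\in(\frac14,\frac12]$, which is your Case~1, with $h_3=\frac12-h_2$ and $h_4\to 0$. These heights satisfy all your constraints, and the first level may hold just one $X$-square of side $h_2$. Even crediting that level with its true minimum $h_2^2$, your estimates give a lower bound on $\area(X)$ tending to $h_2^2+(\frac12-h_2)(\frac12+h_2)=\frac14$. ``Explicit optimization'' over the first two levels cannot repair this without a better estimate for the levels above $y$.

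Your claim that at most one further small level can exist is also false. Take $X$ to be one square of side $\frac{3}{10}$ plus many tiny squares: this gives arbitrarily many levels above $y$, while $X$ alone still packs.

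Two of your reductions are wrong as well. First, \Cref{lemma:mono} says that shrinking squares preserves \emph{packability}, so rounding sides of $X$ down may make $X\cup\{y\}$ packable. Second, the levels of $X\cup\{y\}$ are not ``$y$ followed by the levels of $X$ shifted''. The square $y$ must share its level with at least one $X$-square, and the later level breaks change accordingly.

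The missing idea is to handle everything above $y$'s level at once. Let $x$ be the first square of the second level, and let $X^-$ be the squares outside the first level. Then $\nfdh$ fails to pack $X^-$ into the $1\times\frac12$ rectangle above the first level, so Theorem~\ref{NewRef1} gives $\area(X^-)>x^2+(\frac12-x)(1-x)$. This is the sharp form of your level argument: keeping the area of each level's leading square and telescoping is what produces the $x^2$ term that your per-level sum lacks.

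Moreover, the first level contains an $X$-square of side at least $x$, since otherwise $x$ would fit beside $y$. Hence $\area(X)>2x^2+(\frac12-x)(1-x)=3x^2-\frac32x+\frac12\ge\frac{5}{16}$, with the minimum at $x=\frac14$. No case split on the largest side of $X$ is needed, and neither is the hypothesis that $X$ alone is packable.
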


\begin{proof} 
  Let $X$ and $y$ be as stated in the lemma. Note that if $\nfdh$ cannot pack the multiset $X\cup\{y\}$ in one bin, then in particular it cannot pack the multiset $X$ together with a square of side $\frac{1}{2}$, by \Cref{cor:enlarge}. Therefore, it suffices to prove a lower bound for $\area(X)$ when $y$ has side $\frac{1}{2}$.

  Since any square in $X$ has side at most $\frac{1}{2}$, \nfdh\ forms at least two levels before failing. Consider the packing in the first level obtained with $\nfdh$, and suppose w.l.o.g. that $y$ is the first square packed in this level. Let $X^-$ be the multiset of squares in $X\cup\{y\}$ without the squares packed in the first level, and let $x\in X^-$ be the first square packed in the second level. As $\nfdh$ could not pack $X^-$ above the first level, we have from Theorem~\ref{NewRef1} that $\area(X^-) > x^2+(\frac{1}{2}-x)(1-x)$. The first level contains at least one square of $X$ whose side is at least that of $x$. Indeed, since $y$ has side $\frac{1}{2}$ and every square in $X$ has side at most $\frac{1}{2}$, the square $x$ would otherwise fit in the first level alongside $y$, contradicting the fact that $\nfdh$ opened a new level. Hence, by the ordering of $\nfdh$, we get that

\begin{align*}
 \area(X) &\geq \area(x) + \area(X^-) > x^2 + x^2 + \left(\frac{1}{2}-x\right)(1-x) \\
          &= 3x^2 - \frac{3x}{2} + \frac{1}{2} \geq \frac{5}{16}.
\end{align*}
The last inequality follows from the fact that the minimum of $f(x)=3x^2-\frac{3x}{2}+\frac{1}{2}$ is attained at $x=\frac{1}{4}$, and $f(\frac{1}{4})=\frac{5}{16}$.
\end{proof}

We observe that the fraction $\frac{5}{16}$ of Lemma~\ref{lecincodezesseis} is the best possible. Indeed, if we take $X$ as the multiset of five squares, two with side $\frac{1}{4}+\varepsilon$ and three with side $\frac{1}{4}$, and $y$ a square with side $\frac{1}{2}$, we have that $\area(X)$ can be made arbitrarily close to $\frac{5}{16}$. We thus obtain the following proposition:

\begin{proposition}
\label[proposition]{prop:tight516}
For every $\varepsilon > 0$ there is a multiset $X$ of squares of side at most $\frac{1}{2}$ such that $\nfdh$ packs $X$ in one bin but cannot pack $X \cup \{y\}$, where $y$ has side $\frac{1}{2}$, and $\area(X) < \frac{5}{16} + \varepsilon$.
\end{proposition}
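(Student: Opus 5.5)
The proof is an explicit construction: I make the example mentioned just before the statement quantitative and check it against the $\nfdh$ rules directly. Fix $\varepsilon>0$ and choose $\delta$ with $0<\delta\le\min\{\varepsilon/2,\,1/4\}$. Let $X$ consist of two squares of side $\frac14+\delta$ and three squares of side $\frac14$, and let $y$ have side $\frac12$. All sides are at most $\frac12$, since $\delta\le\frac14$. Three things need checking: the area bound, that $\nfdh$ packs $X$, and that $\nfdh$ fails on $X\cup\{y\}$.

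\textbf{Area.} We have
\[
\area(X)=2\left(\tfrac14+\delta\right)^2+\tfrac{3}{16}=\tfrac{5}{16}+\delta+2\delta^2 .
\]
Since $\delta\le\frac14$, this gives $\delta+2\delta^2\le\frac32\delta\le\frac34\varepsilon<\varepsilon$.

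\textbf{$\nfdh$ packs $X$.} The sorted order is $\frac14+\delta,\frac14+\delta,\frac14,\frac14,\frac14$.
\begin{itemize}
\item The first level takes the three squares $\frac14+\delta,\frac14+\delta,\frac14$, of total width $\frac34+2\delta\le 1$. The next square of side $\frac14$ would bring the width to $1+2\delta>1$, so it does not fit. This level has height $\frac14+\delta$.
\item The second level holds the remaining two squares of side $\frac14$, with height $\frac14$.
\end{itemize}
The total height is $\frac12+\delta\le1$, so $\nfdh$ succeeds.

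\textbf{$\nfdh$ fails on $X\cup\{y\}$.} The sorted order is $\frac12,\frac14+\delta,\frac14+\delta,\frac14,\frac14,\frac14$.
\begin{itemize}
\item Level~1 holds $\frac12$ and $\frac14+\delta$, of width $\frac34+\delta$. Adding the next $\frac14+\delta$ would give width $1+2\delta>1$, so the level closes with height $\frac12$.
\item Level~2 starts with $\frac14+\delta$ and then takes two squares of side $\frac14$, of width $\frac34+\delta$. The third $\frac14$ would give width $1+\delta>1$, so the level closes with height $\frac14+\delta$.
\item Level~3 contains the last square of side $\frac14$ and has height $\frac14$.
\end{itemize}
The total height is $\frac12+\left(\frac14+\delta\right)+\frac14=1+\delta>1$, so $\nfdh$ fails.

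The only delicate point is that $\delta>0$ is essential in both directions. It is what forces the level breaks in both packings: the fourth square to be rejected from the first level of $X$, and the third $\frac14$ to be rejected from level~2 of $X\cup\{y\}$. It is also what makes the final total height strictly exceed~$1$. This matches the extremal analysis in the proof of Lemma~\ref{lecincodezesseis}, where the minimum of $f$ is attained at $x=\frac14$. No earlier result is needed beyond the definition of $\nfdh$.
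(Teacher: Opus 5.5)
Your construction is the paper's (your $\delta$ is its $\varepsilon/2$). Your area estimate and your three-level check that $\nfdh$ fails on $X\cup\{y\}$ are correct, and more detailed than the paper's.

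There is one slip in the parameter range. The step ``the first level takes $\frac14+\delta,\frac14+\delta,\frac14$, of total width $\frac34+2\delta\le 1$'' requires $\delta\le\frac18$, but you only imposed $\delta\le\frac14$. For $\delta\in(\frac18,\frac14]$ that inequality is false. In that range $\nfdh$ closes the first level of $X$ after the two squares of side $\frac14+\delta$, and puts the three squares of side $\frac14$ on the second level. The conclusion still holds, since the total height is again $\frac12+\delta\le 1$, but the argument as written asserts a false inequality. The simplest repair is to choose $0<\delta\le\min\{\varepsilon/2,\frac18\}$. This is exactly the paper's assumption $\varepsilon\le\frac14$ with perturbation $\varepsilon/2$. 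All your other steps only need $\delta\le\frac14$, so they go through unchanged.
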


\begin{proof}
  Let $X$ be a multiset of 5 squares, two of side $\frac{1}{4} + \varepsilon/2$ and three of side $\frac{1}{4}$. Note that $\nfdh$ cannot pack $X$ together with an item of side $\frac{1}{2}$, as there would be no space for the last item of side $\frac{1}{4}$. Since the bound $\frac{5}{16} + \varepsilon$ only weakens as $\varepsilon$ grows, we may assume $\varepsilon \leq \frac{1}{4}$. Note that $\nfdh$ can pack $X$ by putting the two items of size $\frac{1}{4} + \varepsilon/2$ and one item of side $\frac{1}{4}$ in the first level (of width $\frac{3}{4} + \varepsilon \leq 1$) and the remaining items in the second level. Furthermore, note that $\area(X) = \frac{5}{16} + \varepsilon / 2 + \varepsilon^2/2 \leq \frac{5}{16} + \varepsilon$, when $0<\varepsilon \leq \frac{1}{4}$.
\end{proof}

\begin{figure}[ht]
\centering
\includegraphics[width=0.5\textwidth]{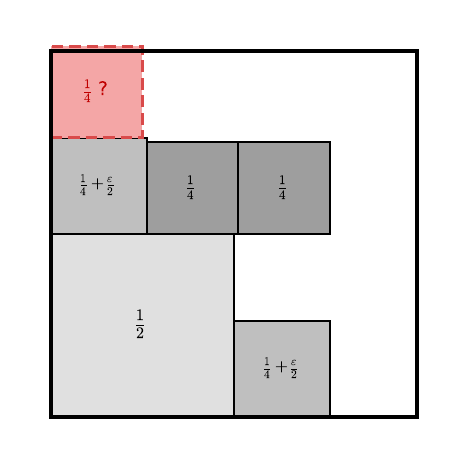}
\caption{Example for \Cref{prop:tight516}: A multiset $X$ of five squares (two of side $\frac{1}{4}+ \frac{\varepsilon}{2}$ and three of side $\frac{1}{4}$) that cannot be packed together with a square $y$ of side $\frac{1}{2}$. The area of $X$ can be made arbitrarily close to $\frac{5}{16}$.}
\label{fig:tight-lemma2}
\end{figure}

\begin{theorem}\label{teoSM}
The algorithm {\rm SM} has an asymptotic competitive ratio of at most $\frac{747}{256} \approx 2.918$.
\end{theorem}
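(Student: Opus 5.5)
The plan is to analyze the moment $t$ at which SM opens its last bin, so that $N=|\cals(t)|+|\calm(t)|$ equals the number of bins SM ever uses. The target constant comes from the decomposition $\frac{747}{256}=\frac{9}{4}+\frac{171}{256}$, and we aim for a bound of the form $N \le \frac{747}{256}\,\opt(\sigma)+O(1)$. Throughout we use the fact that $\opt$ is at least the total area of the items active at any fixed moment.

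\textbf{Case 1: the last bin is opened for a small square.} Every bin in $\cals\cup\calm$ then fails to accept a small square. By Lemma~\ref{lequatronono}, each such bin has area at least $\frac49$. Hence $N-1\le \frac94\,\opt$, which is already below the target.

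\textbf{Case 2: the last bin is opened for a medium square $s$.} We bound $|\cals(t)|$ and $|\calm(t)|$ separately.

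First, we bound $|\cals(t)|$. Let $t_0\le t$ be the last moment at which $\cals$ grew, either because a new bin was opened for a small square or because an $\calm$-bin was converted. In both situations the small square was rejected by every bin already in $\cals$. A new bin is opened only if the square was also rejected by all $\calm$-bins, and a conversion happens only because $\cals$ has priority. So, by Lemma~\ref{lequatronono}, each of the $|\cals(t_0)|-1$ earlier $\cals$-bins had area at least $\frac49$ at time $t_0$. Since $\cals$ never shrinks and does not grow after $t_0$, we get
\[
|\cals(t)|=|\cals(t_0)|\le \tfrac94\,\opt+1 .
\]

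Second, we look at the moment $t$ itself. Every $\cals$-bin rejects the medium square $s$. It is packable by $\nfdh$ by the invariant after Corollary~\ref{cor:subset}, so Lemma~\ref{lecincodezesseis} gives it area $>\frac{5}{16}$.

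Every $\calm$-bin contains only medium squares, and we claim it must contain at least four of them. Indeed, if it held at most three, then after sorting together with $s$, $\nfdh$ would put the two largest in level one (their sides sum to at most $1$) and the next two in level two. The two level heights are each at most $\frac12$, so everything fits, a contradiction. Since each medium square has area $>\frac19$, every $\calm$-bin has area $>\frac49$.

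Putting these together, the total area at time $t$ gives
\[
\opt\ \ge\ \tfrac{5}{16}|\cals(t)|+\tfrac49|\calm(t)|.
\]

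\textbf{Combining the two bounds.} Write $a=|\cals(t)|$ and $b=|\calm(t)|$. We maximize $a+b$ subject to $a\le \frac94\opt+1$ and $\frac{5}{16}a+\frac49 b\le \opt$. Because the area coefficient of $a$, namely $\frac5{16}$, is smaller than that of $b$, namely $\frac49$, the maximum is attained with $a$ as large as possible. At $a=\frac94\opt$, the $\cals$-bins account for $\frac{45}{64}\opt$ of the area. The remaining $\frac{19}{64}\opt$ allows at most
\[
b\le \tfrac94\cdot\tfrac{19}{64}\opt=\tfrac{171}{256}\opt,
\]
so $N\le \frac{747}{256}\opt+O(1)$.

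\textbf{Main obstacle.} The delicate step is the bound on $|\cals(t)|$. It relies on $\cals$ being monotone and on conversions only happening when every current $\cals$-bin rejects the small square, so that the $\frac49$ area bound from Lemma~\ref{lequatronono} applies to all $\cals$-bins at time $t_0$. The claim that a full $\calm$-bin holds four medium squares also needs a careful $\nfdh$ check, and I would verify it first.
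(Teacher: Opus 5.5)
Your proof is correct and is essentially the paper's argument. You combine $|\cals|\le\frac94\opt+1$, obtained from the last moment $\cals$ grew via Lemma~\ref{lequatronono}, with $\opt\ge\frac5{16}|\cals|+\frac49|\calm|$ at the arrival of the last medium square, and optimize to $\frac{747}{256}$. The differences are cosmetic: you treat the small-square case separately and optimize linearly instead of through the ratio in $m'/k'$. One small fix: the freshly opened $\calm$-bin is empty when $s$ arrives, so the area term should be $\frac49(|\calm(t)|-1)$. This changes only the additive constant.
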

\begin{proof}
  Let $\sigma$ be the input sequence for {\rm SM} and set $\calp := \cals \cup \calm$, the set of bins opened by the algorithm after processing $\sigma$. Here, $\opt$ denotes the maximum number of bins opened by an optimal offline algorithm for the input sequence $\sigma$. Recall that an offline algorithm knows its input completely before starting its processing. We would like to prove that $|\calp| \leq \frac{747}{256}\,\opt + \Oh(1)$.
  
  To show the desired inequality, we derive first two lower bounds for $\opt$. For that, let us analyze the last iteration $t$ at which the arrival of a square $s$ caused the union $\cals \cup \calm$ to grow by one bin, that is, it caused a (last) new bin $B$ to be opened. Let $k$ and $m$ be the number of bins in $\cals$ and $\calm$ respectively at the end of iteration $t$. Note that the number of bins used by {\rm SM} for $\sigma$ is at most $k+m$, as no more bins were opened after iteration $t$.

  There are two cases to analyze: at iteration $t$, the bin $B$ was included in $\cals$ (case 1) or $B$ was included in $\calm$ (case 2).

\medskip

\noindent {\bf Case 1:} $B$ was added to $\cals$ at iteration $t$.
\smallskip

In this case, clearly, the square $s$ that arrived at iteration $t$ is small.  Moreover, when $s$ arrived, there were $k-1$ bins in $\cals$ and $m$ bins in $\calm$. As $s$ is small and $s$ could not be packed with the $\nfdh$ algorithm into any of the $k-1$ bins in $\cals$, by Lemma~\ref{lequatronono}, each of these $k-1$ bins had an occupied area of at least~$\frac{4}{9}$.  Thus, the total area occupied by the squares in these bins at iteration $t$ was at least $\frac{4}{9}(k-1)$.

After trying to pack $s$ into each bin in $\cals\setminus\{B\}$, the algorithm~SM tried to pack $s$ into each bin in $\calm$. Thus, we know that $\nfdh$ failed to pack $s$ into such bins (as if $s$ could be packed into a bin in $\calm$, bin $B$ would not have been opened at iteration $t$). In this case, again, by Lemma~\ref{lequatronono}, the occupied area in each of these $m$ bins at iteration $t$ was at least~$\frac{4}{9}$. Thus, the total area occupied by the squares in these bins at iteration $t$ was at least $\frac{4}{9}m$.

It is immediate that the optimum is at least the total area of the squares before the arrival of square $s$. Thus,
\begin{equation}\label{eqdcase1}
 \opt\geq \frac{4}{9}(k-1) + \frac{4}{9}m.
\end{equation}


\noindent {\bf Case 2:} $B$ was added to $\calm$ at iteration $t$.
\smallskip

In this case, the square $s$ that arrived at iteration $t$ is medium and, when $s$ arrived, there were $k$ bins in $\cals$ and $m-1$ bins in $\calm$.
As $s$ could not be packed by $\nfdh$ into a bin in $\calm$, we can conclude that each of these $m-1$ bins had at least four medium squares (that is, squares of side greater than $\frac{1}{3}$). Indeed, if such a bin contained at most three medium squares, $\nfdh$ would place two of them in the first level and the remaining one together with $s$ in the second, as every medium square has side at most $\frac{1}{2}$. Thus, the squares in these bins at iteration $t$ had total area of at least~$\frac{4}{9}(m-1)$.
As $s$ could not be packed by $\nfdh$ into a bin in $\cals$, by Lemma~\ref{lecincodezesseis}, it follows that each of the $k$ bins in $\cals$ at iteration $t$ had occupied area of at least~$\frac{5}{16}$. As before, using an area argument, we have that
\begin{equation}\label{eqdcase2}
 \opt\geq \frac{5}{16}k + \frac{4}{9}(m-1).
\end{equation}

As one of the cases above must occur, we conclude that $\opt$ is at least the minimum of the right-hand side values of inequalities \eqref{eqdcase1} and \eqref{eqdcase2}, and therefore

\begin{equation}\label{eq1}
\opt \geq  \frac{5}{16}(k-1) +\frac{4}{9}(m-1).
\end{equation}

Now, let us derive another lower bound for the optimum.  Note that, as the algorithm runs, the cardinality of $\cals$ never decreases. Some bins of $\cals$ may become empty, but they remain in $\cals$.

As $|\cals|=k$ at the end of iteration~$t$, it means that at some iteration $r\leq t$ the number of bins in~$\cals$ was $k-1$ and grew to $k$. Clearly, this growth was caused by the arrival of a small square~$s$. As we have seen in the analysis of case~$1$, as $s$ could not be packed by the $\nfdh$ algorithm into any of the~$k-1$ bins that existed in $\cals$ when $s$ arrived, we can conclude that each of them had an occupied area of at least $\frac{4}{9}$ (by Lemma~\ref{lequatronono}).  Thus,
\begin{equation}\label{eq2}
\opt \geq  \frac{4}{9}(k-1).
\end{equation}

From (\ref{eq1}) and (\ref{eq2}), we have

\begin{equation}\label{eqd3}
\opt \geq \max\Bigl\{\frac{4}{9} (k-1)\,;\,\frac{5}{16}(k-1)+\frac{4}{9}(m-1)\Bigr\}.
\end{equation}

If $k = 0$, we get that $|\calp| \leq \frac{9}{4}\opt + 2$. Otherwise, set $k' := k-1$ and $m' := m-1$. Then, $|\calp|= k + m = k'+ m'+2$, and therefore, if $k'>0$ and $m'>0$, we have

$$\frac{|\calp|}{\opt} = \frac{ k'+ m'+ 2}{\opt} 
\leq  \frac{k'+ m'}{\max\{\frac{4}{9}k'\,;\,\frac{5}{16}k'+\frac{4}{9}m'\}}+\frac{2}{\opt} 
\leq 2.918+\frac{2}{\opt},$$  

\medskip

\noindent where the last inequality follows by writing $x = m'/k'$, so that the first term of the right-hand side becomes
$(1+x)/\max\{\frac49,\, \frac{5}{16}+\frac49 x\}$,
which increases on $[0,\frac{19}{64}]$ and decreases on $[\frac{19}{64},\infty)$. Its maximum $\frac{747}{256} < 2.918$ is attained at $m' = \frac{19}{64}k'$. Hence $|\calp| \leq \frac{747}{256}\opt + 2$.  If either $k'=0$ or $m'=0$, then $\opt\geq\frac{4}{9}\max\{k', m'\}$, and therefore $|\calp| \leq \frac{9}{4} \opt +2$.
\end{proof}

\medskip

Let us now describe the algorithm SMB, which dynamically packs squares of side at most~$1$.

\medskip

\begin{pascal}
\hrule\vspace{1mm}
{\bf Algorithm} SMB
\vspace{1mm}
\hrule\vspace{1mm}
{\bf Input:} A sequence $L$ of arriving or departing squares, each of side at most $1$.

{\bf Output:} For each arriving square, the bin in which it is allocated. 

\vspace*{2mm}

\item[1]\quad $\calb \leftarrow \emptyset$. 

\item[2]\quad Perform an online partition of $L$ into two subsequences: \\
\hspace*{8.5mm} let $L_1$ be the subsequence of $L$ with the {\it small}  and {\it medium} squares, and \\
\hspace*{8.5mm} let $L_2$ be the subsequence of $L$ with the {\it big} squares. 

\item[3]\quad Apply algorithm SM to list $L_1$. 

\item[4]\quad Pack each square $s$ of $L_2$ into an empty bin $B$ in $\calb$ if one exists; \\
\hspace*{8.5mm} if not, pack $s$ into a new empty bin $B$ and set $\calb \leftarrow \calb \cup\{B\}$. Output $B$. 
\vspace{2mm}
\hrule\vspace{1mm}
\end{pascal}

\medskip

The algorithms for the two lists $L_1$ and $L_2$ have asymptotic competitive ratios of $\frac{747}{256} \approx 2.918$ and $1$, respectively. We use the following observation of Epstein and Levy~\cite{EpsteinL10}: if the input sequence $\sigma$ is partitioned online into subsequences $\sigma_1, \dots, \sigma_k$ and each $\sigma_i$ is packed into its own set of bins by an algorithm with asymptotic competitive ratio $\alpha_i$, then the combined algorithm has asymptotic competitive ratio at most $\sum_{i=1}^{k} \alpha_i$, since an optimal offline packing of $\sigma_i$ never uses more bins than an optimal offline packing of $\sigma$.

\begin{theorem} The algorithm {\rm SMB} has an asymptotic competitive ratio of at most $\frac{1003}{256} \approx 3.918$.
\end{theorem}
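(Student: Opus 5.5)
The proof will combine the two component bounds through the Epstein--Levy partition observation quoted just before the statement. Let $\sigma$ be the input. SMB splits $\sigma$ online into $\sigma_1$ (arrivals and departures of small and medium squares, i.e.\ $L_1$) and $\sigma_2$ (those of big squares, i.e.\ $L_2$). The two subsequences are packed into disjoint sets of bins: $\cals\cup\calm$ for $\sigma_1$, and $\calb$ for $\sigma_2$. Hence, at every moment, the number of active bins of SMB is at most the number of bins ever opened for $\sigma_1$ plus the number ever opened for $\sigma_2$. This gives $\mathrm{SMB}(\sigma)\le \mathrm{SM}(\sigma_1)+\mathcal{A}_2(\sigma_2)$, where $\mathcal{A}_2$ denotes step~4 of SMB.

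Next I bound each term. By Theorem~\ref{teoSM}, $\mathrm{SM}(\sigma_1)\le \frac{747}{256}\opt(\sigma_1)+2$; its proof bounds the total number of bins opened, which is exactly what we need.

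For $\sigma_2$, I will check the claim made earlier in the section. Step~4 opens a new bin only when no bin of $\calb$ is empty. At that moment every existing bin of $\calb$ holds a big square, since a bin of $\calb$ holds at most one big square and a nonempty bin contains exactly one. So when the $N$-th bin is opened, $N$ big squares are simultaneously active. No two big squares fit in one unit bin, because two squares of side $>\frac12$ cannot be placed disjointly in it. Therefore $\opt(\sigma_2)\ge N$, i.e.\ $\mathcal{A}_2(\sigma_2)\le \opt(\sigma_2)$.

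Finally, I compare with $\opt(\sigma)$. Restricting an optimal offline packing of $\sigma$ to the items of $\sigma_i$, at each moment, uses no more active bins than the full packing. Hence $\opt(\sigma_i)\le\opt(\sigma)$ for $i=1,2$. Combining,
\[
\mathrm{SMB}(\sigma)\le \tfrac{747}{256}\opt(\sigma)+2+\opt(\sigma)=\tfrac{1003}{256}\opt(\sigma)+2,
\]
with additive constant $2$, independent of $\sigma$.

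The only point needing care is the first one: the peak of SMB is bounded by the sum of the bins opened by the two subalgorithms, not by the sum of their peaks taken at possibly different times. Since each subalgorithm's bound in Theorem~\ref{teoSM} and in the $\sigma_2$ argument is stated for the total number of bins opened, the sum dominates the peak at every moment, so this causes no difficulty.
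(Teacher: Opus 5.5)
Your proof is correct and follows the same route as the paper, which simply invokes the Epstein--Levy partition observation to add the ratio $\frac{747}{256}$ of Theorem~\ref{teoSM} to the ratio $1$ for $L_2$; you have unpacked that observation, including the checks that the big-square subroutine opens at most $\opt(\sigma_2)$ bins and that $\opt(\sigma_i)\le\opt(\sigma)$. Your closing caveat is harmless but unnecessary: the peak of a sum of active-bin counts is always at most the sum of the individual peaks, so bounding each subalgorithm by its own peak would also have sufficed.
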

\begin{proof}
  By the observation above, the asymptotic competitive ratio of {\rm SMB} is at most the sum of those of the algorithms for $L_1$ and $L_2$, namely $\frac{747}{256} + 1 = \frac{1003}{256} \approx 3.918$.
\end{proof}

\section{Unit-Fraction Items}\label{sec:uf}

In this section, we improve the results for the specific case when the squares have unit-fraction lengths, that is, lengths of the type $\frac{1}{k}$ for $k \in \mathbb{Z}^+$. To start, we define

\begin{equation*}
\falpha{x}{y} = \min \left\{\sum_{i=y}^x\frac{1}{i^2} n_i \;\bigg| \; 1- \frac{1}{y} - \frac{1}{x} < \sum_{i=y}^x\frac{1}{i} n_i \leq 1 - \frac{1}{y}\quad\mbox{and}\quad n_i\in \mathbb{Z}_{\geq 0}\right\}.
\end{equation*}

Intuitively, $\falpha{x}{y}$ represents the minimal total area that can be achieved by arranging squares of side length at least $\frac{1}{x}$ and at most $\frac{1}{y}$ side by side, such that the combined length of their arrangement fits within a segment of length $1 - \frac{1}{y}$ but cannot accommodate an additional square of side $\frac{1}{x}$.

This means the following. Consider a level of an $\nfdh$ packing whose first square has side $\frac{1}{y}$, and suppose that the next square considered by $\nfdh$ has side $\frac{1}{x}$ and does not fit in this level.  Every other square of the level has side at most $\frac{1}{y}$, since the first square of a level is its largest, and at least $\frac{1}{x}$, since $\nfdh$ considers squares in non-increasing order of side; their sides sum to at most $1 - \frac{1}{y}$, as the level has width at most $1$, and to more than $1 - \frac{1}{y} - \frac{1}{x}$, as otherwise the square of side $\frac{1}{x}$ would have fitted.  Hence their multiplicities form a feasible solution of the program above, and $\falpha{x}{y}$ is a lower bound on the total area of the squares of the level other than its first one.

Sometimes, it suffices to use a lower bound on this function; hence, we present the following lemma.

\begin{lemma}
\label[lemma]{lemma:inequality}
    $\falpha{x}{y} \geq (1 - \frac{1}{y} - \frac{1}{x})\frac{1}{x}$.
\end{lemma}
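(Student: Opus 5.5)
The plan is to prove the bound termwise, comparing each square's area with its side length, for an arbitrary feasible vector $(n_y,\dots,n_x)$ of the program defining $\falpha{x}{y}$. Since the minimum is taken over feasible solutions, it suffices to show that every feasible solution has objective value at least $(1-\frac{1}{y}-\frac{1}{x})\frac{1}{x}$. If the feasible set is empty, the minimum is $+\infty$ by convention and the inequality holds trivially. So fix a feasible $(n_i)$. Throughout we may assume $y \le x$, since otherwise the index range is empty and this degenerate case is handled by the same convention.

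The key observation is that for every index $i$ with $y \le i \le x$ we have $\frac{1}{i} \ge \frac{1}{x}$. Hence
\[
\frac{1}{i^2} = \frac{1}{i}\cdot\frac{1}{i} \ge \frac{1}{i}\cdot\frac{1}{x}.
\]
Multiplying by $n_i \ge 0$ and summing over $i$ gives
\[
\sum_{i=y}^x \frac{1}{i^2}n_i \;\ge\; \frac{1}{x}\sum_{i=y}^x \frac{1}{i}n_i.
\]

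Next, we use the strict lower constraint of the program, $\sum_{i=y}^x \frac{1}{i}n_i > 1-\frac{1}{y}-\frac{1}{x}$. Combined with the previous display, this yields
\[
\sum_{i=y}^x \frac{1}{i^2}n_i > \Bigl(1-\frac{1}{y}-\frac{1}{x}\Bigr)\frac{1}{x}.
\]
This holds for every feasible solution. The feasible set is finite, because each $n_i \le i$ is forced by the upper constraint. So the minimum is attained, and it satisfies the same inequality, in fact strictly. This gives the stated non-strict bound.

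There is no real obstacle here. The only points needing care are that the minimum is well defined, which follows from finiteness or from the empty-set convention, and that the inequality $\frac{1}{i^2} \ge \frac{1}{ix}$ uses $i \le x$ together with $n_i \ge 0$. I would state both explicitly so the argument is complete.
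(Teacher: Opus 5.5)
Your proof is correct and is the same argument as the paper's. The paper notes that the total side length exceeds $1-\frac{1}{y}-\frac{1}{x}$ and that every square has height at least $\frac{1}{x}$, which is exactly your termwise bound $\frac{1}{i^2}\ge\frac{1}{i}\cdot\frac{1}{x}$ summed against the lower constraint. (Minor aside: when $y>x$ the feasible set need not be empty, since the all-zero vector is feasible whenever $\frac{1}{y}+\frac{1}{x}>1$, but your general argument covers that case verbatim, so no convention is needed.)
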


\begin{proof}

  Since a square with side $\frac{1}{x}$ does not fit, the occupied length is at least $ 1 - \frac{1}{y} - \frac{1}{x}$. Also, all the squares have height at least $\frac{1}{x}$ so the area is at least $(1 - \frac{1}{y} - \frac{1}{x})\frac{1}{x}$.
    
\end{proof}

However, this lower bound is sometimes insufficient, so we establish an important fact about $\falpha{x}{y}$ that lets us compute it faster.

\begin{lemma}
\label[lemma]{lemma:compute}
    Let $n_y, \dots, n_x$ be an optimal solution, that is, such that $\falpha{x}{y} = \sum_{i=y}^x\frac{1}{i^2} n_i$. For each $i \in \{y, \dots, x - 1 \}$, let $k_i$ be the smallest integer such that there exists an integer $j$ satisfying $\frac{j}{x} \geq \frac{k_i}{i}$ and $\frac{j}{x^2} < \frac{k_i}{i^2}$. Then $n_i < k_i$. 
\end{lemma}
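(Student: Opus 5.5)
We argue by an exchange argument. Suppose, for contradiction, that some optimal solution $(n_y,\dots,n_x)$ has $n_i \ge k_i$ for some $i \in \{y,\dots,x-1\}$. Let $j$ be the integer from the definition of $k_i$, so that $\frac{j}{x} \ge \frac{k_i}{i}$ and $\frac{j}{x^2} < \frac{k_i}{i^2}$. We form a new vector $n'$ by removing $k_i$ squares of side $\frac1i$ and inserting $j$ squares of side $\frac1x$. That is, $n'_i = n_i - k_i \ge 0$, $n'_x = n_x + j$, and all other entries are unchanged. Since $i < x$, these are distinct coordinates, and $n'$ is a vector of nonnegative integers.

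The area changes by $\frac{j}{x^2} - \frac{k_i}{i^2} < 0$, so $n'$ has strictly smaller objective value. It therefore suffices to check that $n'$ is feasible, which contradicts optimality. The total length changes by $\Delta = \frac{j}{x} - \frac{k_i}{i} \ge 0$, so the strict lower constraint $\sum \frac{1}{i} n'_i > 1-\frac1y-\frac1x$ still holds.

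The main obstacle is the upper constraint $\sum \frac1i n'_i \le 1-\frac1y$, since the length may grow. We handle it by minimality of $k_i$ (and, if needed, of $j$). We would first choose $j$ minimal for the given $k_i$, namely $j = \lceil x k_i / i\rceil$, so that $\Delta < \frac1x$.

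If the new length exceeds $1-\frac1y$, we remove squares of side $\frac1x$ from $n'$ one at a time. Each removal lowers both the length and the area. The length decreases in steps of exactly $\frac1x$, and the feasible window $(1-\frac1y-\frac1x,\ 1-\frac1y]$ is a half-open interval of length exactly $\frac1x$. Hence some number of removals lands the length inside the window, provided enough $\frac1x$-squares are available.

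Availability holds because the excess over $1-\frac1y$ is less than $\Delta < \frac1x$, so at most one removal is needed. At least one $\frac1x$-square exists, since we added $j \ge 1$ of them; here $j \ge 1$ because $\frac{j}{x} \ge \frac{k_i}{i} > 0$. After this removal the vector is feasible and has area strictly smaller than $\falpha{x}{y}$, a contradiction. Therefore $n_i < k_i$ for every $i \in \{y,\dots,x-1\}$.

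If $j$ is not taken minimal, we argue the same way, repeatedly removing $\frac1x$-squares. Each removal keeps the area strictly below the original value, and because the window has width $\frac1x$, the length eventually falls into the window. Enough $\frac1x$-squares are available, since the length before the exchange was at most $1-\frac1y$ and we added $j$ of them.
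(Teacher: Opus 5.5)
Your proposal is correct and matches the paper's proof: you exchange $k_i$ squares of side $\frac{1}{i}$ for the minimal number $j=\lceil xk_i/i\rceil$ of $\frac{1}{x}$-squares, observe that the length grows by less than $\frac{1}{x}$, and remove one $\frac{1}{x}$-square if the upper constraint is violated. The one step you leave implicit is that this minimal $j$ still satisfies $\frac{j}{x^2}<\frac{k_i}{i^2}$; it does, because it is at most the witness $j$ from the definition of $k_i$.
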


\begin{proof}
    Suppose that there exists an $i$ such that $n_i \geq k_i$. Let $j^*$ be the smallest integer such that $\frac{j^*}{x} \geq \frac{k_i}{i}$ and $\frac{j^*}{x^2} < \frac{k_i}{i^2}$. Now, consider another solution where $n_x' = n_x + j^*$ and $n_i'= n_i - k_i$. By the definition of $j^*$, the area of this new solution is smaller than $\falpha{x}{y}$, so if we show that $1 - \frac{1}{y} - \frac{1}{x} < \sum_{i=y}^x\frac{1}{i} n_i'  \leq 1 - \frac{1}{y}$ we get a contradiction. In fact, since $j^*$ is the smallest integer such that  $\frac{j^*}{x} \geq \frac{k_i}{i}$ and $\frac{j^*}{x^2} < \frac{k_i}{i^2}$, we increase $\sum_{i=y}^x \frac{n_i}{i}$ by at most $\frac{1}{x}$. If $\sum_{i=y}^x \frac{n_i'}{i} \leq 1 - \frac{1}{y}$, we get the desired contradiction. Otherwise, if $\sum_{i=y}^x \frac{n_i'}{i} > 1 - \frac{1}{y}$, we can decrease $n'_x$ by one and get $ 1 -\frac{1}{y} - \frac{1}{x}< \sum_{i=y}^x \frac{n_i'}{i} \leq 1 - \frac{1}{y}$ and also obtain a contradiction.
\end{proof}

Note that this does not bound $n_x$, but we can easily note that $n_x \leq (1 - \frac{1}{y})x$, so we can enumerate each $n_i$ to find $\falpha{x}{y}$.
We will also use the following fact.

\begin{lemma}
\label[lemma]{lemma:increasing}
    Let $c, y > 0$ be fixed and let $h(x) = $ 
    $(1 - \frac{1}{y} - \frac{1}{x})\frac{1}{x} + \frac{1}{x^2} + (1- \frac{1}{x})(\frac{1}{c} - \frac{1}{x})$. Then $h(x)$ is increasing for $x > \max(c,y)$.
\end{lemma}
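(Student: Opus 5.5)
The plan is to expand $h$ into a rational function of $x$ and read off the sign of $h'(x)$ directly; no earlier result of the paper is needed.

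\textbf{Step 1: simplify $h$.} Expand the three terms:
\[
\Bigl(1-\tfrac{1}{y}-\tfrac{1}{x}\Bigr)\tfrac{1}{x} = \tfrac{1}{x}-\tfrac{1}{xy}-\tfrac{1}{x^2},
\qquad
\Bigl(1-\tfrac{1}{x}\Bigr)\Bigl(\tfrac{1}{c}-\tfrac{1}{x}\Bigr) = \tfrac{1}{c}-\tfrac{1}{x}-\tfrac{1}{cx}+\tfrac{1}{x^2}.
\]
Adding the middle term $\frac{1}{x^2}$, the $\frac{1}{x}$ terms cancel and two of the $\frac{1}{x^2}$ terms cancel. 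This leaves
\[
h(x) = \frac{1}{c} - \Bigl(\frac{1}{y}+\frac{1}{c}\Bigr)\frac{1}{x} + \frac{1}{x^2}.
\]
This expansion is the only step where an error could slip in, so I would recheck it term by term.

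\textbf{Step 2: differentiate.} Treating $x$ as a real variable on $(0,\infty)$, we get
\[
h'(x) = \Bigl(\frac{1}{y}+\frac{1}{c}\Bigr)\frac{1}{x^2} - \frac{2}{x^3} = \frac{\bigl(\frac{1}{y}+\frac{1}{c}\bigr)x - 2}{x^3}.
\]
Since $x>0$, $h'(x)>0$ exactly when $x > 2/\bigl(\frac{1}{y}+\frac{1}{c}\bigr)$, which is the harmonic mean of $y$ and $c$.

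\textbf{Step 3: compare with $\max(c,y)$.} Put $M=\max(c,y)$. Then $\frac{1}{y}\ge\frac{1}{M}$ and $\frac{1}{c}\ge \frac{1}{M}$, so $\frac{1}{y}+\frac{1}{c}\ge \frac{2}{M}$. Hence for $x>M$,
\[
\Bigl(\frac{1}{y}+\frac{1}{c}\Bigr)x \ge \frac{2x}{M} > 2 .
\]
So $h'(x)>0$ on $(M,\infty)$, and $h$ is strictly increasing there. In particular $h$ is increasing along the integers $x>\max(c,y)$, which is how the lemma will presumably be applied.
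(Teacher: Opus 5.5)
Your proof is correct and follows essentially the same route as the paper: both compute $h'(x) = \frac{(c+y)x - 2cy}{cyx^3}$, observe that it is positive exactly when $x$ exceeds the harmonic mean $\frac{2cy}{c+y}$, and bound that harmonic mean by $\max(c,y)$. Your preliminary simplification $h(x) = \frac{1}{c} - \bigl(\frac{1}{y}+\frac{1}{c}\bigr)\frac{1}{x} + \frac{1}{x^2}$ makes the derivative, which the paper states without detail, easy to verify.
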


\begin{proof}
    Taking the derivative with respect to $x$ yields the following expression $\frac{cx - 2cy +xy}{cx^3y}$. Since we want it to be positive, it suffices that $cx -2cy + xy > 0$, so $x > \frac{2cy}{c+y}$. Moreover, $\frac{2cy}{c+y} \leq \frac{2cy}{2\min(c,y)} = \max(c,y)$, so $h'(x) > 0$ whenever $x > \max(c,y)$.
\end{proof}

Throughout this paper, we will need to compute some exact values of $\falpha{x}{y}$. We do this through exhaustive enumeration combined with \Cref{lemma:compute}. More specifically, for each $n_i$, with $i \in \{y, \dots, x - 1\}$, we bound $n_i < k_i$ and $n_x \leq \lfloor (1 - 1 / y)x \rfloor$. (Every $k_i$ is well defined, as any $k \geq i^2/(x(x - i))$ admits an integer $j$ with $j / x \geq k/i$ and $j/x^2 < k/i^2$.) The implementation additionally prunes by branch and bound, using that completing a partial solution of width $w$ costs area at least $(1 - 1/y - 1/x - w)/x$; this does not affect the value returned. The largest enumeration, which is used to compute $\falpha{42}{7}$, has a search space of 313\,286\,400 candidates a priori, which is reduced by the branch-and-bound to only 159 nodes, and runs in less than one second. We use exact fraction arithmetic; more specifically, we use the Python module \textit{fractions}. We note that the witness column of each table exhibits a feasible multiset attaining the stated value, so the upper bounds on $\falpha{x}{y}$ are verifiable; the enumeration serves as a certification that no feasible multiset has a smaller area. The implementation is available at \url{https://doi.org/10.5281/zenodo.22286701}. 

We are now ready to improve on Lemma \ref{lecincodezesseis}.

\begin{lemma}
\label[lemma]{lemma:ufhalf}
Let $X$ be a multiset of unit-fraction squares of side at most $\frac{1}{2}$, and let $y$ be a unit-fraction square of side at most $\frac{1}{2}$. If $\nfdh$ cannot pack $X \cup \{y\}$ in one bin, then $\area(X) \geq \frac{1022131}{2965284}$.
\end{lemma}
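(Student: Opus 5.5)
We reduce to the worst case $y$ of side $\frac12$ and then bound the area level by level, using the function $\falpha{x}{y}$ for every level that is followed by another.

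\textbf{Reduction.} By \Cref{cor:enlarge} we may replace $y$ by a square of side $\frac{1}{2}$. We may also assume that $\nfdh$ packs $X$ in one bin: otherwise we remove squares of $X$, smallest first, until it does. The resulting $X'\subseteq X$ still fails together with $y$, and $\area(X')\le\area(X)$.

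\textbf{Structure of the first level.} Let $y$ be the first square of the first level. As in the proof of Lemma~\ref{lecincodezesseis}, the first level must contain a square of $X$ of side $\frac12$ beside $y$, since otherwise the first square of the second level would fit next to $y$. So the first level consists of two squares of side $\frac{1}{2}$, one of them in $X$ (area $\frac14$), followed by nothing of positive width. The remaining levels have total height above $\frac12$ and together must fail to fit in a $1\times\frac12$ strip.

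Let the levels above the first have first squares of sides $\frac{1}{y_1}\ge\frac{1}{y_2}\ge\cdots\ge\frac1{y_r}$, with $y_1\ge2$. The failure condition is $\sum_j \frac1{y_j}>\frac12$. For each level $j<r$ the next square has side $\frac1{y_{j+1}}$, so the level contributes area at least $\frac{1}{y_j^2}+\falpha{y_{j+1}}{y_j}$. The last level contributes at least $\frac1{y_r^2}$. The total bound is therefore
\[
\tfrac14+\sum_{j<r}\Bigl(\tfrac1{y_j^2}+\falpha{y_{j+1}}{y_j}\Bigr)+\tfrac1{y_r^2},
\]
minimized over integer sequences $y_1\le y_2\le\cdots$ with $\sum_j 1/y_j>\frac12$.

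\textbf{Truncating the search.} If $y_1=2$, the second level already contributes a side-$\frac12$ square: the total exceeds $\frac14+\frac14$, which is above the target. So $y_1\ge3$. To make the minimization finite, we split off a tail. Once the current level has first square $\frac1{y}$ and the remaining height to exceed is $\frac1c$, Theorem~\ref{NewRef1} applied to the rest of the strip, combined with \Cref{lemma:inequality}, lower-bounds the remainder by an expression like $h(x)$ of \Cref{lemma:increasing}. Since $h$ is increasing for $x>\max(c,y)$, large next-level sides $x$ are dominated by a threshold value. This caps the $y_j$ (presumably up to $42$, matching the largest computation $\falpha{42}{7}$). We then compute the exact $\falpha{\cdot}{\cdot}$ values by the enumeration of \Cref{lemma:compute}, take the minimum over the finitely many sequences, and check that it equals $\frac{1022131}{2965284}$. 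Note that $2965284=4\cdot 741321$ and $741321=861^2=(3\cdot7\cdot41)^2$, which suggests the extremal configuration involves levels headed by sides $\frac13,\frac17$ with next sides related to $\frac1{41}$ or $\frac1{42}$.

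\textbf{Main obstacle.} The hard part is making the truncation rigorous. We need to show that for every level index, choices of $x$ beyond the enumerated range give area at least the claimed bound, via monotonicity of $h$, while handling the last-level and tail terms correctly. I would first try a two-level split: enumerate the first two upper levels exactly and bound everything beyond by the Meir–Moser inequality, refining only where that bound is too weak.
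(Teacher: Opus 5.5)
The step that fails is your description of the first level. The argument you borrow from Lemma~\ref{lecincodezesseis} only shows that the first level contains \emph{some} square of $X$ besides $y$, namely one of side at least that of the first square of the second level. It does not give a square of side $\frac{1}{2}$. With unit fractions, take $y$ followed by a square of side $\frac{1}{3}$: this has width $\frac{5}{6}$, a second square of side $\frac13$ no longer fits and starts the second level, and the part of $X$ in the first level has area only $\frac{1}{9}$. A side-$\frac12$ square next to $y$ is forced only when the second level itself starts with a square of side $\frac{1}{2}$.

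This is not a corner case; it is exactly the extremal situation. In the example of \Cref{prop:tight} the levels have heights $\frac12,\frac13,\frac17,\frac1{42}$, the first level is $y$ plus one square of side $\frac13$, and $X$ contains no square of side $\frac12$ at all. So the constant $\frac14$ in your objective is not a valid lower bound for the first level, and your case analysis does not cover this instance. There $y_1=3$, so your formula would give $\area(X)\ge\frac14+\frac19=\frac{13}{36}>0.36$, while in fact $\area(X)=\frac{1022131}{2965284}+\frac1m\approx 0.3447$. Your objective therefore does not bound $\area(X)$ from below, and minimizing it cannot yield the lemma.

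The fix is to use $\falpha{y_1}{2}$ for the first level instead of $\frac14$. The squares of $X$ there have sides in $[\frac{1}{y_1},\frac12]$ and total width in $(\frac12-\frac1{y_1},\frac12]$. This is the paper's starting inequality $\area(X)>\falpha{x}{2}+f(x)$, with $f$ coming from Theorem~\ref{NewRef1} on the $1\times\frac12$ strip. The light first level is what makes $x=3$ the worst branch.

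The truncation, which you correctly identify as the main obstacle but leave open, still has to be done. The paper peels off one level at a time and bounds everything above by Theorem~\ref{NewRef1} on the remaining strip, of height $\frac12$, then $\frac16$, then $\frac1{42}$. Past a threshold it replaces the $\alpha$-term by $0$ or by the bound of Lemma~\ref{lemma:inequality}, so that a monotone expression already exceeds the target: $x\ge 9$, $x'\ge 25$, and $x''\ge 43$ (the last via Lemma~\ref{lemma:increasing}). It then refines only the worst branch $x=3$, $x'=7$, $x''=42$, where $\falpha{42}{7}=\frac{29459}{1482642}$ gives exactly $\frac{1022131}{2965284}$.

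Your plan of ``enumerate a few levels, bound the tail by Meir--Moser, refine where weak'' has this shape. However, it must be run on the corrected first-level term and actually carried out.
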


\begin{proof}

  Without loss of generality, let $y$ be a square with side $\frac{1}{2}$, by \Cref{cor:enlarge}. Also without loss of generality, we assume that the first item packed is $y$, which is possible since there is no item larger than $y$. Since the largest square also has side length at most $\frac{1}{2}$, $\nfdh$ creates at least two levels before failing. Let $\frac{1}{x}$ be the side of the first item packed in the second level. Furthermore, by \Cref{NewRef1}, $\nfdh$ can always pack squares of side at most $\frac{1}{x}$ into a rectangle of width $1$ and height $\frac{1}{2}$ if their area is at most $f(x) = \frac{1}{x^2} + (\frac{1}{2} - \frac{1}{x})(1 - \frac{1}{x})$; hence, a multiset that $\nfdh$ fails to pack has area greater than $f(x)$.

We can now partition $X$ into two multisets $X^+$ and $X^-$, where $X^+$ are the items packed into the first level and $X^- = X \setminus X^+$. As argued before, we get that $\area(X^+) \geq \falpha{x}{2}$ since $\nfdh$ sorts the items so every item in the first level has side at least $\frac{1}{x}$. Furthermore, $\area(X^-) > f(x)$ since $X^-$ cannot be packed into the remaining space. So we get the following inequality $\area(X) = \area(X^+) + \area(X^-) > \falpha{x}{2} + f(x)$. We can thus begin a case analysis using different values of $x$, as shown in \Cref{tab:falpha_2}.

\begin{table}[htbp]
\centering
\renewcommand{\arraystretch}{1.3} 
\begin{tabular}{@{\hspace{0.3cm}} c c @{\hspace{1.5cm}}c @{\hspace{1.5cm}} c @{\hspace{0.3cm}}}
\toprule
\textbf{$x$} & \textbf{$\falpha{x}{2}$} & \textbf{Witness} & \textbf{$\falpha{x}{2} + f(x)$} \\
\midrule
2 & $\frac{1}{4}$ & $1: \frac{1}{2}$ & $\frac{1}{2} = 0.5$\\
3 & $\frac{1}{9}$ & $1: \frac{1}{3}$ & $\frac{1}{3} \approx 0.333$\\
4 & $\frac{1}{9}$ & $1: \frac{1}{3}$ & $\frac{13}{36} \approx 0.361$\\
5 & $\frac{2}{25}$ & $2: \frac{1}{5}$ & $\frac{9}{25} = 0.36$\\
6 & $\frac{61}{900}$ & $1:\frac{1}{5}, 1:\frac{1}{6}$ & $\frac{28}{75} \approx 0.373$\\
7 & $\frac{3}{49}$ & $3: \frac{1}{7}$ & $\frac{19}{49} \approx 0.387$\\
8 & $\frac{81}{1568}$ & $1: \frac{1}{7}, 2: \frac{1}{8}$ & $\frac{155}{392} \approx 0.395$\\

\bottomrule
\end{tabular}
\caption{Values for $\falpha{x}{2}.$ The worst case happens when $x = 3$. The entry $c : \frac{1}{i}$ denotes $c$ squares of side $\frac{1}{i}$.}
\label{tab:falpha_2}
\end{table}

When $x \geq 9$, we can drop the $\falpha{x}{2}$ term and use only $f(x)$. Since $f(x)$ is an increasing function, for this last case we obtain that $\area(X) \geq f(9) = \frac{29}{81}\approx 0.358$.

The witness column of each table exhibits a feasible multiset attaining the stated value, which certifies the corresponding upper bound on $\falpha{x}{y}$. Now, since the worst case happens when $x = 3$, we see that $\area(X) \geq \frac{1}{3}.$

We now refine the case where $x = 3$. Suppose that there are only 2 levels. Since we cannot pack $X \cup \{y\}$, let $\frac{1}{x^+}$ be the size of the first item that $\nfdh$ fails to pack. Note that $\frac{1}{x^+}$ must be between $\frac{1}{5}$ and $\frac{1}{3}$, as otherwise it would have fit in a third level. We first note that $\falpha{3}{2}  + \frac{1}{9} = \frac{2}{9}$. If $x^+ = 3$, then $\area(X) \geq \frac{2}{9} + \falpha{3}{3} + \frac{1}{9} = \frac{5}{9} \approx 0.555$. If $x^+ =4$, then $\area(X) \geq \frac{2}{9} + \falpha{4}{3} + \frac{1}{16} = \frac{59}{144} \approx 0.409$. Finally, if $x^+ = 5$, then $\area(X) \geq \frac{2}{9} + \falpha{5}{3} + \frac{1}{25} = \frac{86}{225} \approx 0.382$. We can now evaluate what happens if there is another level, where $\frac{1}x'$ is the side of the first item packed in that level and, using a similar idea to the previous one, we have that $\area(X) \geq \frac{2}{9} + \falpha{x'}{3} + \frac{1}{x'^2} + (1 - \frac{1}{x'})(\frac{1}{6} - \frac{1}{x'})$. Results are shown in \Cref{tab:falpha_3}.

\begin{table}[htbp]
\centering
\renewcommand{\arraystretch}{1.3} 
\begin{tabular}{@{\hspace{0.3cm}} c c @{\hspace{1.5cm}}c @{\hspace{1.5cm}} c @{\hspace{0.3cm}}}
\toprule
\textbf{$x'$} & \textbf{$\falpha{x'}{3}$} & \textbf{Witness} & \textbf{$\frac{2}{9} + \falpha{x'}{3} + \frac{1}{x'^2} + (1 - \frac{1}{x'})(\frac{1}{6} - \frac{1}{x'})$} \\
\midrule
6 & $\frac{43}{450}$ & $1: \frac{1}{5}, 2: \frac{1}{6}$ & $\frac{311}{900} \approx 0.3456$\\
7 & $\frac{4}{49}$ & $4: \frac{1}{7}$ & $\frac{152}{441} \approx 0.3447$\\
8 & $\frac{241}{3136}$ & $3: \frac{1}{7}, 1:\frac{1}{8}$ & $\frac{9911}{28224} \approx 0.351$\\
9 & $\frac{337}{5184}$ & $1: \frac{1}{8}, 4: \frac{1}{9}$ & $\frac{67}{192} \approx 0.3490$\\
10 & $\frac{3}{50}$ & $6:\frac{1}{10}$ & $\frac{317}{900} \approx 0.3522$\\
11 & $\frac{171}{3025}$ & $4: \frac{1}{10}, 2: \frac{1}{11}$ & $\frac{9689}{27225} \approx 0.3559$\\
12 & $\frac{145}{2904}$ & $1: \frac{1}{11}, 6: \frac{1}{12}$ & $\frac{3097}{8712} \approx 0.3555$\\
13 & $\frac{8}{169}$ & $8: \frac{1}{13}$ & $\frac{545}{1521} \approx 0.3583$\\
14 & $\frac{89767}{2004002}$ & $1: \frac{1}{11}, 1: \frac{1}{13}, 6:\frac{1}{14}$  & $\frac{13005899}{36072036} \approx 0.3606$\\
15 & $\frac{1793}{44100}$ & $1: \frac{1}{14}, 8 : \frac{1}{15}$ & $\frac{3181}{8820} \approx 0.3607$\\
16 & $\frac{5}{128}$ & $10: \frac{1}{16}$ & $\frac{209}{576} \approx 0.3628$\\
17 & $\frac{67313}{1812608}$ & $1:\frac{1}{14}, 2:\frac{1}{16}, 7:\frac{1}{17}$ & $\frac{5943289}{16313472} \approx 0.3643$\\
18 & $\frac{1607}{46818}$ & $1: \frac{1}{17}, 10:\frac{1}{18}$ & $\frac{3793}{10404} \approx 0.3646$\\
19 & $\frac{12}{361}$ & $12:\frac{1}{19}$ & $\frac{1190}{3249} \approx 0.3663$\\
20 & $\frac{165729}{5216450}$ & $1: \frac{1}{17}, 3:\frac{1}{19}, 8:\frac{1}{20}$ & $\frac{1724524}{4694805} \approx 0.3673$\\
21 & $\frac{1747}{58800}$ & $1: \frac{1}{20}, 12:\frac{1}{21}$ & $\frac{9263}{25200} \approx 0.3676$\\
22 & $\frac{7}{242}$ & $14: \frac{1}{22}$ & $\frac{1607}{4356} \approx 0.3689$\\
23 & $\frac{313933369}{11291187600}$ & $1: \frac{1}{20}, 1:\frac{1}{21}, 2:\frac{1}{22}, 10:\frac{1}{23}$ & $\frac{1391632723}{3763729200} \approx 0.3697$\\
24 & $\frac{3991}{152352}$ & $1:\frac{1}{23}, 14:\frac{1}{24}$ & $\frac{28181}{76176} \approx 0.3699$\\

\bottomrule
\end{tabular}
\caption{Values for $\falpha{x'}{3}$ used in \Cref{lemma:ufhalf}. The worst case happens when $x' = 7$.}
\label{tab:falpha_3}
\end{table}

When $x' \geq 25$, we can again drop the $\falpha{x'}{3}$ term and use only $\frac{2}{9}  + \frac{1}{x'^2} + (1 - \frac{1}{x'})(\frac{1}{6} - \frac{1}{x'})$, to obtain that, in this case, $\area(X) \geq \frac{1943}{5625} \approx 0.3454$.

The worst case happens when $x' = 7$, so we get that $\area(X) \geq \frac{152}{441}$.

To finish the proof, let us refine the case where $x' = 7$. Note that $\falpha{7}{3} + \frac{1}{49} = \frac{5}{49}$. If there are only 3 levels, denote by $\frac{1}{x''}$ the side of the smallest item being packed. $x''$ can take any value in the set $\{7, 8,\dots, 41\}$, so the area is at least $\falpha{x''}{7} + \frac{2}{9} + \frac{5}{49} + \frac{1}{x''^2} \geq (1 - \frac{1}{7} - \frac{1}{x''})\frac{1}{x''} +  \frac{2}{9} + \frac{5}{49} + \frac{1}{x''^2}$, which is a decreasing function in $x''$, so the worst case happens when $x'' = 41$, that yields $\area(X) \geq \frac{6241}{18081}$. Now we have to consider the case when $x'' \geq 42$. We have that $\area(X) \geq \falpha{x''}{7} + \frac{2}{9} + \frac{5}{49} + (\frac{1}{42} - \frac{1}{x''})(1 - \frac{1}{x''}) + \frac{1}{x''^2}$. When $x'' = 42$, since $\falpha{42}{7} = \frac{29459}{1482642}$ we get that $\area(X) \geq \frac{1022131}{2965284}$. 

Otherwise, we can use Lemma \ref{lemma:inequality} to obtain that $\area(X) \geq (1 - \frac{1}{7} - \frac{1}{x''})\frac{1}{x''} +  \frac{2}{9} + \frac{5}{49} + (\frac{1}{42} - \frac{1}{x''})(1 - \frac{1}{x''}) + \frac{1}{x''^2}$. As noted in Lemma \ref{lemma:increasing}, this is an increasing function, so the worst case happens when $x'' = 43$, so we obtain that $\area(X) \geq \frac{281102}{815409}$. 
Comparing all the subcases, the worst case happens when $x'' = 42$, so we obtain that overall $\area(X) \geq \frac{1022131}{2965284}$.

\end{proof}

We can also adapt the other lemma to obtain:

\begin{lemma}
\label[lemma]{lemma:ufsmall}
Let $X$ be a multiset of unit-fraction small or medium squares, and let $y$ be a
unit-fraction small square. If $\nfdh$ cannot pack $X \cup \{y\}$ in one bin,
then $\area(X) \geq \frac{119498}{247107}$.
\end{lemma}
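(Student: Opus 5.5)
By \Cref{cor:enlarge}, we may replace $y$ by a square of side $\frac13$, the largest small unit-fraction square. So assume $\side(y)=\frac13$. Unlike in \Cref{lemma:ufhalf}, $y$ need not be the first square packed, because $X$ may contain medium squares of side $\frac12$.

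Rather than bounding $\area(X)$ directly, we bound $\area(X\cup\{y\})$ from below along the level decomposition of $\mathcal{P}(X\cup\{y\})$, which has total height greater than $1$. At the end we subtract $\area(y)=\frac19$.

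For a level whose first square has side $\frac1a$, and whose successor level starts with a square of side $\frac1x$, the level has area at least $\frac1{a^2}+\falpha{x}{a}$. This is exactly the argument explained after the definition of $\falpha{x}{y}$. For the portion of the packing that does not fit above the levels fixed so far, we use \Cref{NewRef1} as in \Cref{lemma:ufhalf}. If the remaining strip has height $H$ and its first square has side $\frac1x$, the squares there have area greater than $\frac1{x^2}+(H-\frac1x)(1-\frac1x)$.

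The case analysis proceeds on the sides of the successive level heads, in the same style as the proof of \Cref{lemma:ufhalf}:
\begin{enumerate}
\item First level head $\frac12$. The second head $\frac1x$ ranges over $x\ge 2$. For each $x$, tabulate $\frac14+\falpha{x}{2}+\frac1{x^2}+(\frac12-\frac1x)(1-\frac1x)$, which is the \Cref{tab:falpha_2} bound plus $\frac14$.
\item First level head $\frac13$. All squares then have side at most $\frac13$. Tabulate $\frac19+\falpha{x}{3}+\frac1{x^2}+(\frac23-\frac1x)(1-\frac1x)$ over $x\ge 3$.
\end{enumerate}
In each case, beyond a finite threshold of $x$ we drop the $\falpha{\cdot}{\cdot}$ term, or replace it by the bound of \Cref{lemma:inequality}. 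Monotonicity (\Cref{lemma:increasing}, or direct monotonicity of the Meir--Moser term) then disposes of all large $x$ at once. The finite cases are computed exactly by the enumeration justified by \Cref{lemma:compute}, with a witness multiset recorded for each value.

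After subtracting $\frac19$, the worst subcases will likely fall below the target $\frac{119498}{247107}\approx 0.4836$. I expect these to occur when the heads are $\frac13,\frac13$ or $\frac12,\frac13$ (for instance, three levels of $\frac13$-squares). Such subcases are refined exactly as in \Cref{lemma:ufhalf}. We fix the next head $\frac1{x'}$ and distinguish two situations:
\begin{enumerate}
\item The packing ends with the current level. Then we bound by the side $\frac1{x''}$ of the square that fails to fit. Since that square must be too large for any further level, $x''$ lies in a finite range. The resulting bound, obtained through \Cref{lemma:inequality}, is monotone in $x''$.
\item A further level exists. Then we apply \Cref{NewRef1} to the remaining height, and recurse one more level where necessary.
\end{enumerate}
Each refinement adds one exact $\falpha{\cdot}{\cdot}$ term. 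The minimum over all subcases should be the stated fraction, attained at one specific triple of heads with an exactly computed $\falpha{x''}{c}$; its denominator $247107$ suggests a product of squares such as $(3\cdot 7\cdot\ldots)^2$ divided appropriately.

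One subtlety must be handled carefully. The subtraction of $\frac19$ is only valid if $y$ is counted exactly once in the lower bound. Since $y$ may sit inside any level, I will argue that every area bound used counts actual squares of $X\cup\{y\}$, so removing $y$ costs exactly $\frac19$. If it turns out to be advantageous, one can alternatively place $y$ last among equal sides, so that it lies in the failing tail.

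The main obstacle is bookkeeping rather than conceptual. We need to identify which subcases need deeper refinement and to certify the exact minimum by computation. I would first compute the coarse two-level tables for both first-head cases, locate those below $0.4836+\frac19$, and refine only those, stopping each branch once \Cref{lemma:inequality} together with monotonicity yields a bound above the target.
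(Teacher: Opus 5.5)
Your plan is sound and, carried out, gives exactly $\frac{119498}{247107}$. It differs from the paper mainly in that you would redo a computation the paper simply cites. Write $Z=X\cup\{y\}$ with $\side(y)=\frac13$.

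\textbf{Head $\frac12$.} Here $Z$ minus its head $h$ is a multiset of unit-fraction squares of side at most $\frac12$ that $\nfdh$ cannot pack together with $h$. So \Cref{lemma:ufhalf} applies to it verbatim and gives $\area(Z)\ge \frac14+\frac{1022131}{2965284}$. Hence $\area(X)=\area(Z)-\frac19\ge \frac{1022131}{2965284}+\frac{5}{36}=\frac{119498}{247107}$, with no new tables. The refinement you propose in this case (table values plus $\frac14$, then refining at second head $\frac13$, third head $\frac17$, fourth head $\frac1{42}$) is literally the proof of \Cref{lemma:ufhalf} shifted by $\frac14$.

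\textbf{How the paper does it.} The paper splits on the number of half-squares in $X$. With exactly one, it makes the same reduction, phrased as trading that half-square for $y$ (one more item of side $\frac12$, one fewer of side $\frac13$). With two or more, it notes $\area(X)\ge\frac12$ directly; your head-$\frac12$ case absorbs this subcase, since the bound from \Cref{lemma:ufhalf} does not depend on the count.

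\textbf{Head $\frac13$.} Your computation coincides with the paper's table for $\falpha{x}{3}+\frac1{x^2}+(\frac23-\frac1x)(1-\frac1x)$. Its minimum is $\frac12$ at $x=4$, and it is at least $\frac{41}{81}$ for $x\ge 9$, so no refinement is needed. Your guess that heads $\frac13,\frac13$ fall below the target is therefore off, though harmless. The only subcases below the target lie on the chain of heads $\frac12,\frac13,\frac17$ inherited from \Cref{lemma:ufhalf}, with the minimum at fourth head $\frac1{42}$, which is where the tight example lives.

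\textbf{Counting $y$.} Bounding $\area(Z)$ and subtracting $\frac19$ is correct and resolves your worry about where $y$ sits in the packing. It is the same accounting the paper uses.
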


\begin{proof}

    We consider cases based on the number of squares of side length $\frac{1}{2}$ in $X$. If there are at least two of them, then $\area(X) \geq \frac{1}{2}$, and the result is valid. 
    Furthermore, if there is only one of them, we can assume without loss of generality that $y = \frac{1}{3}$, as per \Cref{cor:enlarge}. This reduces to the case in \Cref{lemma:ufhalf}, except with $X$ with one more item of side $\frac{1}{2}$ and one fewer of side $\frac{1}{3}$, so $\area(X) \geq \frac{1022131}{2965284} + \frac{1}{4} - \frac{1}{9} = \frac{119498}{247107}$. We can now analyze the case where there is no item with length $\frac{1}{2}$.

    We can follow a very similar approach to the previous lemma's proof, by defining $g(x) = \frac{1}{x^2} + (\frac{2}{3} - \frac{1}{x})(1 - \frac{1}{x})$, where we get the inequality of the form $\area(X) \geq \falpha{x}{3} + g(x)$, where $\frac{1}{x}$ is again the side of the first item packed into the second level. Results are shown in \Cref{tab:falpha_3_2}.

\begin{table}[htbp]
\centering
\renewcommand{\arraystretch}{1.3} 
\begin{tabular}{@{\hspace{0.3cm}} c c @{\hspace{1.5cm}}c @{\hspace{1.5cm}} c @{\hspace{0.3cm}}}
\toprule
\textbf{$x$} & \textbf{$\falpha{x}{3}$} & \textbf{Witness} & \textbf{$\falpha{x}{3} + g(x)$} \\
\midrule
3 & $\frac{2}{9}$ & $2: \frac{1}{3}$ & $\frac{5}{9} \approx 0.556$\\
4 & $\frac{1}{8}$ & $2: \frac{1}{4}$ & $\frac{1}{2} = 0.5$\\
5 & $\frac{3}{25}$ & $3: \frac{1}{5}$ & $\frac{8}{15} \approx 0.533$\\
6 & $\frac{43}{450}$ & $1: \frac{1}{5}, 2: \frac{1}{6}$ & $\frac{27}{50} = 0.54$\\
7 & $\frac{4}{49}$ & $4:\frac{1}{7}$ & $\frac{27}{49} \approx 0.551$\\
8 & $\frac{241}{3136}$ & $3: \frac{1}{7}, 1: \frac{1}{8}$ & $\frac{5329}{9408} \approx 0.566$\\

\bottomrule
\end{tabular}
\caption{Values for $\falpha{x}{3}$ used in \Cref{lemma:ufsmall}. The worst case happens when $x = 4$.}
\label{tab:falpha_3_2}
\end{table}

Again, when $x \geq 9$, we can ignore the term $\falpha{x}{3}$ and obtain that $\area(X)\geq \frac{41}{81} \approx 0.506$. Comparing the cases, the worst case happens when there is one item of side $\frac{1}{2}$, so $\area(X) \geq \frac{119498}{247107}$.

\end{proof}

Combining both lemmas, we obtain the following bound:

\begin{theorem}
\label{thm:ufsm}
The algorithm {\rm SM} has an asymptotic competitive ratio of at most $\frac{3377129}{1433976} \approx 2.355$ when all items are unit-fraction squares with side at most $\frac{1}{2}$.
\end{theorem}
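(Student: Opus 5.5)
We follow the proof of Theorem~\ref{teoSM} verbatim and replace its three area constants with their unit-fraction counterparts. The observation that simplifies things is that a unit-fraction \emph{medium} square (side in $(\frac13,\frac12]$) must have side exactly $\frac12$. Let $t$ be the last iteration at which a new bin is opened, and let $k=|\cals|$ and $m=|\calm|$ at the end of iteration $t$. Write
\[
a:=\tfrac{119498}{247107}
\quad\text{and}\quad
b:=\tfrac{1022131}{2965284},
\]
and note that $2965284 = 12\cdot 247107$, so $a=\frac{1433976}{2965284}$.

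\textbf{Case 1 (the new bin joins $\cals$).} The arriving square is small. Every one of the other $k-1+m$ bins failed to receive it, so by Lemma~\ref{lemma:ufsmall} each has area at least $a$. This gives $\opt\ge a(k-1)+a\,m$.

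\textbf{Case 2 (the new bin joins $\calm$).} The arriving square has side $\frac12$. Each of the other $m-1$ bins in $\calm$ contains only squares of side $\frac12$. Since \nfdh\ failed to add one more, each such bin holds four squares of side $\frac12$, so its area is $1$. (The argument of Theorem~\ref{teoSM} already gives at least four medium squares, and here each has area $\frac14$.) Each bin of $\cals$ failed to receive a square of side $\frac12$, so by Lemma~\ref{lemma:ufhalf} it has area at least $b$. This gives $\opt\ge b\,k+(m-1)$.

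Since $b<a$, taking the minimum over the two cases yields $\opt\ge b(k-1)+(m-1)$. As in the original proof, the moment at which $\cals$ grew from $k-1$ to $k$ bins gives the second bound $\opt\ge a(k-1)$, again by Lemma~\ref{lemma:ufsmall}.

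With $k'=k-1$, $m'=m-1$ and $x=m'/k'$, it remains to maximize
\[
\phi(x)=\frac{1+x}{\max\{a,\; b+x\}} .
\]
On the range where $b+x\le a$, $\phi$ is increasing in $x$. Beyond that range, $\phi(x)=(1+x)/(b+x)$, which is decreasing because $b<1$. So the maximum is attained at $x=a-b$, where
\[
\phi(a-b)=\frac{1+a-b}{a}.
\]
Over the common denominator $2965284$ we have $a-b=\frac{411845}{2965284}$, hence $1+a-b=\frac{3377129}{2965284}$, and dividing by $a$ gives exactly $\frac{3377129}{1433976}\approx 2.355$.

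The degenerate cases $k'=0$ or $m'=0$ give $\opt\ge \min\{a,1\}\cdot\max\{k',m'\}$, i.e.\ ratio at most $1/a<2.1$, and the additive constant stays $2$. Hence $|\calp|\le \frac{3377129}{1433976}\opt+2$.

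The main obstacle is only bookkeeping: checking that Lemmas~\ref{lemma:ufhalf} and~\ref{lemma:ufsmall} apply exactly in the two failure situations, and verifying the exact fraction arithmetic. All the substantive work was already done in those two lemmas.
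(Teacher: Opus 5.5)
There is a genuine gap in the step where you combine the two cases. Your Case 2 bound, the second bound $\opt\ge a(k-1)$, the optimization of $\phi$, and the exact arithmetic (where $a-b=\frac{5}{36}$) are all fine. The problem is that in Case 1 you only have $\opt\ge a(k-1)+a\,m$, and this does \emph{not} dominate $b(k-1)+(m-1)$. The $k$-coefficients compare the right way ($a>b$), but the $m$-coefficients go the wrong way: since $1-a>\frac12$, we get $a\,m<m-1$ as soon as $m\ge 2$. For example, with $k=1$ and $m=10$, Case 1 only certifies $\opt\ge 10a\approx 4.84$, not $\opt\ge 9$. So ``since $b<a$'' does not justify $\opt\ge b(k-1)+(m-1)$. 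As written, the ratio computation is therefore unsupported whenever the last opened bin joined $\cals$.

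The missing observation is one you already use in Case 2. Bins of $\calm$ never received a small square, so in the unit-fraction setting they contain only squares of side $\frac12$. If $\nfdh$ cannot add a small square to such a bin, the bin must already hold four half-squares: with at most three, two go on the first level and the third plus the new square on the second. So each such bin has area $1$. This gives $\opt\ge a(k-1)+m$ in Case 1, which does dominate $b(k-1)+(m-1)$, and that is exactly how the paper argues.

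Alternatively, you can keep your weaker Case 1 bound and simply not take the minimum. Case 1 gives $\opt\ge a(k+m-1)$, hence $|\calp|\le \frac1a\opt+1$ with $\frac1a\approx 2.068<2.355$, so only Case 2 needs the two-bound optimization.

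You should also treat $k=0$ or $m=0$ (that is, $k'=-1$ or $m'=-1$) separately, as the paper does; both are trivial.
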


\begin{proof}
    The improved bounds from Lemmas \ref{lemma:ufhalf} and \ref{lemma:ufsmall} are:
    \begin{itemize}
        \item Lemma \ref{lemma:ufhalf}: $\area(X) \geq \frac{1022131}{2965284} \approx 0.34470$.
        
        \item Lemma \ref{lemma:ufsmall}: $\area(X) \geq \frac{119498}{247107} \approx 0.48359$.
    \end{itemize}

    Since the items are unit fractions, we note that the medium class of items consists only of squares of side $\frac{1}{2}$. Therefore, when a square does not fit into a bin in $\calm$, the occupied area in such a bin is $1$. This comes from the fact that, if there were at most 3 squares of side $\frac{1}{2}$ in such a bin, any item $s$ with side length $\leq \frac{1}{2}$ would be packed using $\nfdh$, which places two squares of side $\frac{1}{2}$ at the first level and the other $\frac{1}{2}$-square and $s$ on the second one. We can now proceed as in the proof of \Cref{teoSM} to obtain

    \begin{equation*}
        \opt \geq \max \left\{\frac{119498}{247107}(k -1),\, \min \left\{ \frac{119498}{247107}(k -1) + m, \, \frac{1022131}{2965284}k + m-1\right\} \right\}.
        \end{equation*}

\noindent Thus, 
    \begin{equation*}
        \opt \geq \max \left\{\frac{119498}{247107}(k - 1),  \,\frac{1022131}{2965284}(k - 1) + m-1 \right\}.
    \end{equation*}

\noindent  If $k = 0$, we obtain that $|\calp| \leq \opt + 2$. By setting $k' = k - 1, m' = m - 1$, we have that 

    $$\frac{|\calp|}{\opt} = \frac{ k'+ m'+ 2}{\opt} 
    \leq  \frac{k'+ m'}{\max \left\{\frac{119498}{247107}k',  \frac{1022131}{2965284}k' + m' \right\} }+\frac{2}{\opt} 
    \leq \frac{3377129}{1433976}+\frac{2}{\opt}.$$  

    The last inequality follows by observing that, by setting $x = \frac{m'}{k'}$
    in the inequality in the middle,  the ratio of the first part of this inequality  becomes $\frac{1 + x}{\max\{A, B + x\}}$, for $ A = \frac{119498}{247107}$ and $B = \frac{1022131}{2965284}$. This ratio increases on $[0, A-B]$ and decreases on $[A - B, \infty)$ as $B < 1$, and its  maximum value  $\frac{3377129}{1433976}$ is attained at $ x= A - B$, that is, $m' = \frac{5}{36}k'$. When $k' = 0$, we have that the ratio goes to 1. Finally, if $m' \leq 0$, then $\opt \geq \frac{119498}{247107}k'$ gives $|\calp| = k' + m' + 2 \leq \frac{247107}{119498}\opt + 2 \leq \frac{3377129}{1433976}\opt + 2$.
    
\end{proof}

The bound of Lemma~\ref{lemma:ufhalf} is best possible.

\begin{proposition}\label[proposition]{prop:tight}
For every $\varepsilon > 0$ there is a multiset $X$ of unit-fraction squares of
side at most $\frac{1}{2}$ such that $\nfdh$ packs $X$ into a single bin but
cannot pack $X \cup \{s\}$, where $s$ is a square of side $\frac{1}{2}$, and
$\area(X) \leq \frac{1022131}{2965284} + \varepsilon$.
\end{proposition}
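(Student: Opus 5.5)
The plan is to build $X$ by reversing the chain of worst cases in the proof of \Cref{lemma:ufhalf}, namely $x=3$, then $x'=7$, then $x''=42$. Each lower bound used there is met with equality, except the Meir--Moser term of the last level, which I will make nearly tight with tiny squares. Concretely, let $W$ be an optimal multiset for $\falpha{42}{7}$, i.e.\ an optimal solution of the program defining it. Its existence and value $\frac{29459}{1482642}$ are certified by the enumeration described before \Cref{lemma:ufhalf}. Its sides lie in $[\frac{1}{42},\frac17]$, their sum lies in $(1-\frac17-\frac1{42},\,1-\frac17]$, and its area is $\falpha{42}{7}$. For a large integer $N$, with $N$ a multiple of $42$ so that all sides stay unit fractions, take
\[
X=\{\tfrac13,\tfrac13\}\cup\{4\times\tfrac17\}\cup\{\tfrac17\}\cup W\cup\{\tfrac1{42}\}\cup T_N ,
\]
where $T_N$ consists of $\lfloor N(1-\frac1{42})\rfloor+1$ squares of side $\frac1N$.

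The first step is to trace $\nfdh$ on $X\cup\{s\}$ with $s$ of side $\frac12$.
\begin{itemize}
\item Level 1 has height $\frac12$ and holds $s,\frac13$; the second $\frac13$ does not fit.
\item Level 2 has height $\frac13$ and holds $\frac13$ and four copies of $\frac17$, of total width $\frac{19}{21}$; a fifth $\frac17$ does not fit. This level realizes the witness of $\falpha{7}{3}$ from \Cref{tab:falpha_3}.
\item Level 3 has height $\frac17$ and holds $\frac17$ together with $W$; by the width constraint of $W$, the square $\frac1{42}$ does not fit.
\item Level 4 has height $\frac1{42}$ and holds $\frac1{42}$ and then about $N(1-\frac1{42})$ squares of side $\frac1N$ before the row is full.
\item The surplus square of side $\frac1N$ opens Level 5.
\end{itemize}
Since $\frac12+\frac13+\frac17+\frac1{42}=1$, Level 5 exceeds height $1$, so $\nfdh$ fails on $X\cup\{s\}$. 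I must check the choice $|T_N|=\lfloor N(1-\frac1{42})\rfloor+1$ carefully so that exactly one square spills over; with $42\mid N$ the row fills exactly, which makes this clean.

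The second step is the area count. We have
\[
\area(X)=\tfrac29+\tfrac5{49}+\falpha{42}{7}+\tfrac1{42^2}+|T_N|/N^2 ,
\]
where the first two terms are the contributions from \Cref{lemma:ufhalf}, namely $\frac19+\falpha{3}{2}$ and $\falpha{7}{3}+\frac1{49}$. The first four terms sum to exactly $\frac{1022131}{2965284}$; this matches the $x''=42$ computation in \Cref{lemma:ufhalf}. The last term is $|T_N|/N^2\le 1/N+1/N^2$, which is below $\varepsilon$ for $N$ large.

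The third step is to show that $\nfdh$ packs $X$ alone into one bin. After sorting, the first level holds $\frac13,\frac13$ and two copies of $\frac17$, of width $\frac{20}{21}$, and has height $\frac13$. All remaining squares have side at most $\frac17$, and their total area is at most about $\frac3{49}+\falpha{42}{7}+\frac1{1764}+\varepsilon<0.09$. By \Cref{NewRef1} applied to the rectangle $(1,\frac23)$ with $\ell=\frac17$, the remaining squares fit above the first level, since the threshold is $\frac1{49}+\frac67(\frac23-\frac17)\approx0.47$. Strictly, \Cref{NewRef1} says $\nfdh$ packs these remaining squares into a $1\times\frac23$ rectangle by itself. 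Because this run is exactly the continuation of $\nfdh$ on $X$ above the first level, this suffices.

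The main obstacle is the exact bookkeeping in the first step. I have to confirm that $W$ together with the leading $\frac17$ really forms one level, which uses sides at most $\frac17$ and width at most $1-\frac17$. I must also make sure no square of $W$ or $T_N$ slips into an earlier level. It cannot: $\nfdh$ is next-fit, so it never returns to a closed level. Finally, I need that the fourth level is exactly filled, so that the single extra tiny square forces the overflow.
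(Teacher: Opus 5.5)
Your proposal is correct and is essentially the paper's construction. The paper reverses the same chain $x=3$, $x'=7$, $x''=42$, makes $W$ explicit as one square of side $\frac{1}{41}$ and thirty-four of side $\frac{1}{42}$, and appends $m$ squares of side $\frac{1}{m}$; no divisibility condition is needed, since $\frac{1}{1}$-width worth of tiny squares already overflows the fourth level. The one real difference is that the paper verifies that $\nfdh$ packs $X$ alone by computing its height $\frac12+\frac1m$ directly, whereas you apply Theorem~\ref{NewRef1} above the first level, which works equally well and does not require knowing $W$ explicitly.
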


\begin{proof}
Let $m \geq \lceil 1/\varepsilon \rceil$ with $m > 42$ and let $X$ consist of two squares of side ${1}/{3}$, five of side ${1}/{7}$, one of side ${1}/{41}$,
thirty-five of side ${1}/{42}$, and $m$ squares of side ${1}/{m}$. Then
$$\area(X) = \frac{1022131}{2965284} + \frac{1}{m} \leq
\frac{1022131}{2965284} + \varepsilon.$$

Running $\nfdh$ on $X \cup \{s\}$ produces the four levels shown in 
Figure~\ref{fig:tight-lemma34}, of heights
${1}/{2}$, ${1}/{3}$,
${1}/{7}$ and ${1}/{42}$,
which sum to exactly $1$. The fourth level
holds one square of side $\frac{1}{42}$ and then $\lfloor \frac{41m}{42}
\rfloor$ squares of side $\frac{1}{m}$; the remaining squares of side
$\frac{1}{m}$ cannot be placed, since the level has no width left and no
further level fits. Hence, $\nfdh$ fails on $X \cup \{s\}$. On the other hand, 
$\nfdh$ packs $X$ alone in height $\frac{1}{2} + \frac{1}{m} < 1$.
\end{proof}

\paragraph{Remark.} The bound in \Cref{lemma:ufsmall} is also the best possible. By considering the same $X$ as in the previous proposition but swapping one of the squares with side $\frac{1}{3}$ for a square of side $\frac{1}{2}$,  we get that $\area(X) \leq \frac{119498}{247107} + \varepsilon$.

\begin{figure}[ht]
\centering
\includegraphics[width=0.5\textwidth]{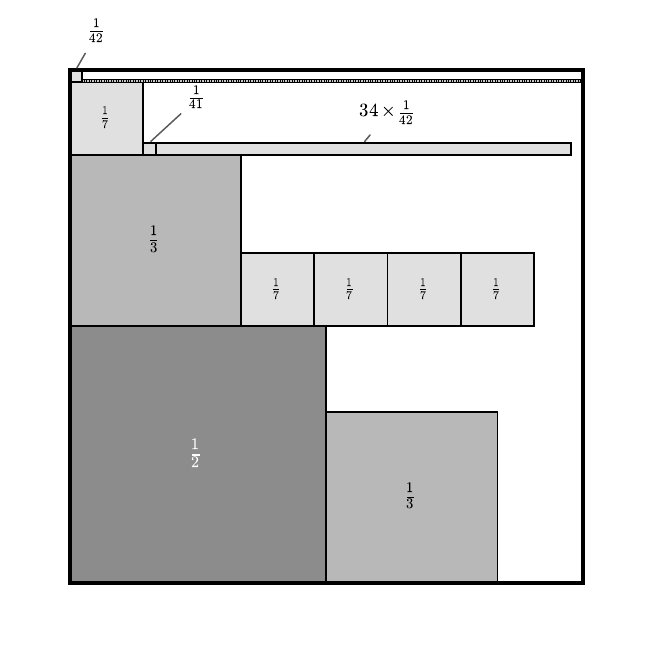}
\caption{Example for \Cref{prop:tight}: A multiset $X$ with 2 items of side $\frac{1}{3}$, 5 items of side $\frac{1}{7}$, 1 item of side $\frac{1}{41}$, 35 items of side $\frac{1}{42}$, and $m$ items of side length equal to $\frac{1}{m}$ that cannot be packed together with an additional item of side $\frac{1}{2}$.}
\label{fig:tight-lemma34}
\end{figure}

The algorithms for the two lists $L_1$ and $L_2$ have asymptotic competitive ratios of $\frac{3377129}{1433976}$ and $1$, respectively, when restricted to unit-fraction items. We again apply the observation of Epstein and Levy stated above.

\begin{theorem}
\label{thm:ufsmb}
The algorithm {\rm SMB} has an asymptotic competitive ratio of at most
 $\frac{4811105}{1433976} \approx 3.356$ when all items are unit-fraction squares.
\end{theorem}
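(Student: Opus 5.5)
The proof will follow the same pattern as the proof of the general SMB bound. SMB partitions the input online into $L_1$ (small and medium squares) and $L_2$ (big squares), and packs each sublist into its own set of bins. When every item is a unit-fraction square, the items of $L_1$ are exactly the unit-fraction squares of side at most $\frac{1}{2}$, so SM is applied to an input satisfying the hypothesis of \Cref{thm:ufsm}. Hence the part of SMB handling $L_1$ has asymptotic competitive ratio at most $\frac{3377129}{1433976}$.

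Next I will check that the algorithm for $L_2$ still has asymptotic competitive ratio $1$. The argument given for general squares carries over verbatim. A bin holds at most one big square, and a new bin in $\calb$ is opened only when all existing bins of $\calb$ are occupied. So if $N$ bins of $\calb$ are ever opened, there is a moment with $N$ big squares present simultaneously, and $\opt \ge N$ at that moment. Restricting to unit fractions only shrinks the input family, so the bound is unaffected; in fact the big unit-fraction squares are exactly those of side $1$.

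Finally I will invoke the observation of Epstein and Levy. The sequence $\sigma$ is split online into $\sigma_1$ (the items of $L_1$) and $\sigma_2$ (the items of $L_2$). An optimal offline packing of each $\sigma_i$ uses at most $\opt(\sigma)$ bins, since $\sigma_i$ is a subsequence of $\sigma$. Therefore
\[
\mathrm{SMB}(\sigma) \le \tfrac{3377129}{1433976}\opt(\sigma) + b_1 + \opt(\sigma) + b_2 .
\]
Checking the arithmetic, $\frac{3377129}{1433976} + 1 = \frac{4811105}{1433976} \approx 3.356$.

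There is no real obstacle here. The only point needing care is confirming that both sublists inherit the unit-fraction restriction, so that \Cref{thm:ufsm} applies to $L_1$ and the trivial analysis still applies to $L_2$, while the additive constants stay absolute.
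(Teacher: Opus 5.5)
Your proposal is correct and is exactly the paper's argument: apply \Cref{thm:ufsm} to the sublist $L_1$ of unit-fraction squares of side at most $\frac{1}{2}$, use ratio $1$ for the big squares in $L_2$, and sum via the Epstein--Levy observation to get $\frac{3377129}{1433976} + 1 = \frac{4811105}{1433976}$. Your extra checks, that both sublists inherit the unit-fraction restriction and that the additive constants remain absolute, are sound.
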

\begin{proof}
By that observation, the ratio is at most $\frac{3377129}{1433976} + 1 = \frac{4811105}{1433976} \approx 3.356$.
\end{proof}

\section{Power-Fraction Items} \label{sec:pf}

In this section, we present an algorithm for the case in which the items are squares with side $\frac{1}{2^k}$ for some $k \in \mathbb{Z}_{\geq 0}$. The algorithm is a First-Fit strategy: as items arrive, they are packed into the first bin in which they fit; if they do not fit in any existing bin, a new bin is opened.

To decide whether a multiset $X$ of items fits into a given bin, we use the following procedure. We partition the bin into 4 squares of side $\frac{1}{2}$, keep 3, and recursively partition the fourth. Given a multiset $X$, while there exist 4 items with the same side $s \leq \frac{1}{2}$, we merge them into one item with double the side. Note that this operation does not change the total area of $X$. On termination, every side $\frac{1}{2^j}$ occurs at most 3 times. 

If no item of side $1$ remains, we place the at most three items of each side
$\frac{1}{2^j}$ in the three cells of that side, and the procedure succeeds. If exactly 1 item of side 1 remains and there is no other item, we place it alone in the bin. In every other case, the procedure fails.

With this in mind, we can prove the following lemma.

\begin{lemma}\label[lemma]{lemma:pfarea}
    The procedure above fails to pack $X$ if, and only if, $\area(X) > 1$.
\end{lemma}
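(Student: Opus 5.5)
We prove both directions by tracking the area through the merging phase and then reading off the terminal configuration. The first observation is that merging four items of side $s\le\frac12$ into one item of side $2s$ preserves total area, since $4s^2=(2s)^2$. So the multiset $X'$ obtained at termination satisfies $\area(X')=\area(X)$. The process terminates because each merge reduces the number of items by three. At termination, every side $\frac{1}{2^j}$ with $j\ge 1$ occurs at most $3$ times, and items of side $1$ may occur any number of times, since side-$1$ items are never merged.

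For the direction ``if $\area(X)\le 1$ then the procedure succeeds'', we split on whether $X'$ contains an item of side $1$.
\begin{itemize}
\item If $X'$ has no item of side $1$, the procedure succeeds by definition. The placement is valid because the recursive partition supplies exactly three cells of side $\frac{1}{2^j}$ for every $j\ge1$, and these cells are pairwise disjoint.
\item If $X'$ has an item of side $1$, then $\area(X')\le 1$ forces $X'$ to be exactly that single item. In that case the procedure places it alone and succeeds.
\end{itemize}

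For the converse, ``if the procedure succeeds then $\area(X)\le1$'', note that success produces a packing of $X'$ into one bin, either using the disjoint cells or as a single unit item. Hence $\area(X)=\area(X')\le 1$.

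There is a subtlety here: a packing of $X'$ in which merged items are placed as blocks corresponds to a genuine packing of $X$. This holds because each merged item of side $2s$ is literally tiled by its four constituents of side $s$, recursively. This is the only point needing care. The remaining point is ensuring that the cell count (three of each size, with total area $\sum_{j\ge1}3\cdot 4^{-j}=1$) matches, which is a geometric-series check.

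Equivalently, the failure cases can be argued directly. Failure occurs exactly when $X'$ contains a side-$1$ item together with something else, or contains at least two side-$1$ items. In either case $\area(X')>1$.

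Conversely, if there is no side-$1$ item, then $\area(X')\le\sum_{j\ge1}3\cdot4^{-j}<1$ when finitely many sizes are present, so success and $\area\le1$ coincide. This establishes the equivalence $\text{fail}\iff\area(X)>1$.
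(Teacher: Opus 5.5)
Your proof is correct and follows the paper's argument. Failure after the area-preserving merges forces a side-$1$ item together with at least one further item, so $\area(X)>1$, while success yields an actual packing into one bin, so $\area(X)\le 1$; you also make explicit points the paper leaves implicit in ``no packing exists'', namely termination of merging, disjointness of the three cells per size, and that each merged block is tiled by its constituents.
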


\begin{proof}
    Suppose the algorithm fails to pack the items into the bin. This means that, after merging items into larger ones, there is at least one square of side $1$, since otherwise the algorithm could pack the items. Moreover, if the multiset contained only 1 item of side $1$, then the algorithm would succeed; therefore, there must be some additional item, and thus the merged area exceeds 1. Since the area does not change during the merging, we get that $\area(X) > 1$.

    Conversely, if $\area(X) > 1$, the algorithm fails to pack the items into a bin, since no packing exists.
\end{proof}

By \Cref{lemma:pfarea}, a multiset of power-fraction squares fits in a bin if and only if its total area is at most $1$. The problem is therefore equivalent to one-dimensional dynamic bin packing in which the item sizes are $4^{-j}$, $j \in \mathbb{Z}_{\ge 0}$. In particular, these sizes are unit-fractions, so the bound of $2.4842$ obtained by Han et al.~\cite{HAN2010} for First-Fit on one-dimensional unit-fraction items applies verbatim to our instances.  We show that it can be improved to $\frac{577087}{261120} < 2.211$ by exploiting the fact that the sizes are powers of four. This follows since the use of only items that are powers of four allows for better bounds on the $\fgamma{x}{y}$ function, as we will show later.  We follow the recursion of Chan et al.~\cite{CHAN2008} on the sequence $(b_i, r_i)$ and use these exact values in the case analysis of \Cref{app:power-fraction-cases}.

First, we define a function $\fgamma{x}{y}$, which represents the minimum load that a bin must have if an item of side $\frac{1}{y}$ does not fit and the bin contains items of types $T_x = \{2^j : 0 \leq j \leq \log_2 x\}
= \{1, 2, 4, \ldots, x\}$. Formally,

\begin{equation*}
    \fgamma{x}{y} = \min\left\{ \sum_{i \in T_x} \frac{n_i}{i^2}
    \;\middle|\; n_i \in \mathbb{Z}_{\geq 0} \text{ for all } i \in T_x,
    \ \sum_{i \in T_x} \frac{n_i}{i^2} > 1 - \frac{1}{y^2} \right\}.
\end{equation*}

Some useful facts about this function are that $\fgamma{x}{y} \geq 1 - \frac{1}{y^2}$, $\fgamma{x}{1} = \frac{1}{x^2}$, and $\fgamma{x}{x} = 1$. The first inequality follows directly from the definition of $\gamma$. The second follows from the fact that, since the item to be inserted has side length 1, the smallest possible area of items already inserted is just one item of side $\frac{1}{x}$. Finally, the third fact follows from the observation that every item type has area $4^{-j}$ for some $j$ with $2^j \leq x$, so every achievable load is an integer multiple of $\frac{1}{x^2}$. The smallest such multiple exceeding $1 - \frac{1}{x^2}$ is 1, and a load of exactly 1 is attained by $x^2$ items of side $\frac{1}{x}$; hence $\fgamma{x}{x}=1$.

Number the bins $1, 2, \dots$ in the order in which First-Fit opens them, so that
the last bin opened has index equal to the number of bins used. Now we define the ordered pairs $(b_i, r_i)$ as follows: $b_1$ is the number of bins opened by the First-Fit algorithm, and $\frac{1}{r_1}$ is the side of the smallest item ever placed in bin $b_1$. The next pairs are defined recursively: $b_{i+1}$ is the largest index $b < b_i$ such that some item of side smaller than $\frac{1}{r_i}$ is ever placed in bin $b$, and $\frac{1}{r_{i+1}}$ is the side of the smallest item ever placed in bin $b_{i+1}$. If it is not possible to define another ordered pair, we stop the construction. We denote by $k$ the index of the last ordered pair for a sequence $\sigma$. Note that, by construction, $r_1 < r_2 < \dots < r_k$, and these values are powers of 2.

Now consider the moment when an item of side $\frac{1}{r_k}$ is placed in bin $b_k$. It is easy to see that the total load at this moment is
$$
l_k \geq \frac{1}{r_k^2} + (b_k - 1)\fgamma{r_k}{r_k},
$$

\noindent as bins $1, \dots, b_k - 1$, at that instant, contain no item of side smaller than $\frac{1}{r_k}$, and each rejected the item, so each has load at least $\fgamma{r_k}{r_k}$. Combining this with the load incurred by the item of side $\frac{1}{r_k}$, we obtain the desired bound.

Now consider some time when an item of side $\frac{1}{r_i}$ is being placed in bin $b_i$. The load at this moment is
$$
l_i \geq \frac{1}{r_i^2} + (b_i - b_{i+1} - 1)\fgamma{r_i}{r_i} + \sum_{p = i + 1}^{k - 1} (b_p - b_{p+1}) \fgamma{r_p}{r_i} + b_k \fgamma{r_k}{r_i}.
$$

We start by defining the maximum load $l = \max_i l_i$, which is well defined, as $k$ is finite, since $b_{i + 1} < b_i$, so $k \leq b_1$. We seek inequalities of the form $b_1 \leq c_1 l + c_2$. Note that, since $l \leq \opt$, such an inequality would imply an asymptotic competitive ratio of $c_1$. The observation that $l \leq \opt$ comes from the fact that, at that point, the total area is equal to the load, so an offline algorithm has to use at least $\lceil l \rceil$ bins. The analysis involves multiple cases based on the values of $k$ and the sequence $(r_1, r_2, \ldots, r_k)$. The detailed calculations for all cases are provided in Appendix~\ref{app:power-fraction-cases}. These cases cover all possibilities.  Indeed, if $k = 1$ we are in Case~1, and if $k \ge 2$ and $r_1 \ge 2$ we are in Case~3, so assume $k \ge 2$ and $r_1 = 1$. If $k = 2$ we are in Case~2, so assume $k \ge 3$.  As $r_i$ are strictly increasing powers of two, $r_2 = 2$ or $r_2 \ge 4$; in the latter case we are in Case~5.  If $r_2 = 2$ and $k = 3$ we are in Case~4, so assume $k \ge 4$; again $r_3 = 4$ or $r_3 \ge 8$, and in the latter case we are in Case~7.  Finally, if $r_3 = 4$ then $k = 4$ gives Case~6 and $k \ge 5$ gives Case~8.

\begin{table}[h]
\centering
\renewcommand{\arraystretch}{1.4}
\begin{tabular}{@{}cp{5cm}c@{}}
\toprule
\textbf{Case} & \textbf{Conditions} & \textbf{Competitive Ratio} \\ 
\midrule
1 & $k = 1$ & 1 \\ [0.2em]
2 & $k = 2, r_1 = 1$ & $2$ \\ [0.2em]
3 & $k \geq 2, r_1 \geq 2$ & $\frac{19}{15} \approx 1.267$ \\[0.2em]
4 & $k = 3, r_1 = 1, r_2 = 2$ & $\frac{35}{16} \approx 2.188$ \\[0.2em]
5 & $k \geq 3, r_1 = 1, r_2 \geq 4$ & $\frac{130}{63} \approx 2.063$ \\[0.2em]
6 & $k = 4, r_1 = 1, r_2 = 2, r_3 = 4$ & $\frac{2261}{1024} \approx 2.208$ \\[0.2em]
7 & $k \geq 4, r_1 = 1, r_2 = 2, r_3 \geq 8$ & $\frac{8953}{4080} \approx 2.194$ \\[0.2em]
8 & $k \geq 5, r_1 = 1, r_2 = 2, r_3 = 4$ & $\frac{577087}{261120} \approx 2.211$ \\
\bottomrule
\end{tabular}
\caption{Competitive ratios for different cases in the power-fraction analysis. The worst-case ratio of $\frac{577087}{261120} \approx 2.211$ occurs when $k \geq 5, r_1=1, r_2=2, r_3=4$.}
\label{tab:power-fraction-cases}
\end{table}

As shown in Table~\ref{tab:power-fraction-cases}, the worst case occurs when
$k \geq 5$, $r_1 = 1$, $r_2 = 2$ and $r_3 = 4$. This yields the following result.

\begin{theorem}
\label{thm:pf}
The First-Fit algorithm has an asymptotic competitive ratio of at most
$\frac{577087}{261120} < 2.211$ for dynamic packing of power-fraction squares.
\end{theorem}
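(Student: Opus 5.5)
The proof combines the recursive load inequalities already set up with an exhaustive case analysis on $k$ and the prefix $(r_1,r_2,r_3)$. By \Cref{lemma:pfarea}, First-Fit on power-fraction squares is exactly one-dimensional First-Fit with item sizes $4^{-j}$. So a bin rejects an item of side $\frac1y$ only if its load exceeds $1-\frac1{y^2}$. Moreover, at the moment bin $b_i$ receives its smallest item, bins with index in $(b_{p+1},b_p)$ contain only items of side at least $\frac1{r_p}$. This justifies the displayed bounds
\[
l_i \ge \frac{1}{r_i^2} + (b_i-b_{i+1}-1)\fgamma{r_i}{r_i} + \sum_{p=i+1}^{k-1}(b_p-b_{p+1})\fgamma{r_p}{r_i} + b_k\fgamma{r_k}{r_i}.
\]
Since $l=\max_i l_i\le \opt$, it suffices in each case to find nonnegative multipliers $\lambda_1,\dots,\lambda_k$ with $\sum_i\lambda_i=1$ such that $\sum_i \lambda_i l_i \ge b_1/c - O(1)$, where $c$ is the claimed ratio for that case. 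This is an LP-duality argument over the variables $d_i=b_i-b_{i+1}$ (with $d_k=b_k$), which are nonnegative and sum to $b_1$.

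Before that, I need exact values of $\gamma$ for power-of-two arguments. Every load from $T_x$ is a multiple of $\frac1{x^2}$. Hence $\fgamma{x}{y}$ is the smallest multiple of $\frac1{x^2}$ exceeding $1-\frac1{y^2}$. This gives $\fgamma{x}{y}=1-\frac1{y^2}+\frac1{x^2}$ when $y\le x$, and $\fgamma{x}{y}=\frac1{x^2}\lceil x^2(1-\frac1{y^2})\rceil^+$ otherwise; when $x<y$ the latter equals $1$, since $\frac1{x^2}\ge\frac1{y^2}$ forces the load up to $1$. In fact $\fgamma{x}{y}=\min\{1,\,1-\frac1{y^2}+\frac1{x^2}\}$ for powers of two, and $\fgamma{1}{y}=1$. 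These clean formulas are what improve on the generic unit-fraction values of Han et al.

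Next I verify the eight cases in Table~\ref{tab:power-fraction-cases}, whose coverage is already argued in the text. Case~1 is trivial: $l_1\ge b_1\cdot 1$ up to an additive constant. Case~2 ($r_1=1$) uses $l_1\ge(b_1-b_2)+b_2\fgamma{r_2}{1}$ and $l_2\ge b_2$, mixing the two bounds. Case~3 ($r_1\ge2$) uses only $l_1$, in which every coefficient is at least $\frac34+\frac1{r_p^2}$ or so. In the remaining cases I choose the $\lambda_i$ so that the weighted coefficients of each $d_j$ are equal. For the open-ended cases (Cases~5, 7, 8) I lower-bound the tail terms $\fgamma{r_p}{r_i}$ for $p\ge4$ uniformly, using $\fgamma{r_p}{r_i}\ge 1-\frac1{r_i^2}$ and $l_p\ge\sum_{q\ge p}d_q\cdot\frac34$-type bounds. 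This reduces each case to a finite small LP whose optimum is the tabulated ratio.

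The main obstacle is Case~8 ($r=(1,2,4,\dots)$ with $k\ge5$, unbounded). There the tail $r_4<r_5<\dots$ is arbitrary, and I must show that the worst case is attained in a specific limiting configuration, yielding $\frac{577087}{261120}$. My first approach is to use the $\lambda$-weights on $l_1,l_2,l_3$ plus one aggregated tail inequality. I then argue monotonicity: increasing any $r_p$ for $p\ge4$ only increases the relevant $\gamma$ coefficients or keeps them at $1$. This lets me take $r_4=8$, $r_5=16$ as the worst case and collapse the tail into one variable. The remaining exact computation is a 4–5 variable LP to be solved in exact rationals. Finally, the theorem's bound is the maximum over the table, $\frac{577087}{261120}<2.211$.
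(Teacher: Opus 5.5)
Your setup is sound. The inequalities for $l_i$, the closed form $\fgamma{x}{y}=\min\{1,\,1-\frac1{y^2}+\frac1{x^2}\}$ for powers of two, and convex multipliers on the $l_i$ (which for these triangular systems amounts to the paper's back-substitution) all hold. The gap is in Case~8, the case that decides the theorem: it rests on a false monotonicity claim.

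For $p>i$ one has $\fgamma{r_p}{r_i}=1-\frac{1}{r_i^2}+\frac{1}{r_p^2}$, which \emph{decreases} as $r_p$ grows. $r_p$ helps only in $l_p$ itself, where it is the second argument. So enlarging $r_4,r_5$ weakens the coefficients of $d_4,d_5$ in $l_1,l_2,l_3$ (the $d_4$-coefficient in $l_1$ is $1/r_4^2\to 0$), while strengthening $l_4,l_5$. Fixing $r_4=8$, $r_5=16$ therefore uses bounds such as $l_1\ge d_1+\frac14 d_2+\frac1{16}d_3+\frac1{64}d_4$. These are not valid once $r_4\ge 16$, since bins in $(b_5,b_4]$ may then hold a single item of side $\frac1{16}$. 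Nothing in your argument shows that this configuration dominates the infinitely many others; that would need a genuine trade-off argument. As a symptom, computing exactly at $r_4=8$, $r_5=16$ gives a constant strictly below $\frac{577087}{261120}$, so your plan could not reproduce ``the tabulated ratio''.

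The missing idea is the paper's decoupling. Bound each coefficient separately by its infimum over the admissible range. This gives inequalities valid simultaneously for all $k\ge 5$ and all tails, even though no single configuration attains them. In $l_1,l_2,l_3$ replace $\fgamma{r_p}{r_i}$ for $p\ge 4$ by $1-\frac1{r_i^2}$, i.e.\ by $0,\frac34,\frac{15}{16}$. In $l_4,l_5$ use $r_4\ge 8$, $r_5\ge 16$ to get tail coefficients at least $\frac{63}{64}$ and $\frac{255}{256}$. This yields the chain $b_5\le\frac{256}{255}l+O(1)$, $b_4\le l+\frac1{64}b_5+O(1)$, $b_3\le l+\frac1{16}b_4+O(1)$, $b_2\le l+\frac3{16}b_3+\frac1{16}b_4+O(1)$ and $b_1\le l+\frac34 b_2+\frac3{16}b_3+\frac1{16}b_4$. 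Back-substitution gives exactly $\frac{577087}{261120}$.

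Both tail inequalities are needed. With a single aggregated one ($l_4$ with $b_5\le b_4$, so $b_4\le\frac{64}{63}l+O(1)$), the same chain ends at $\frac{8911}{4032}$. That is still below $2.211$ but above the stated constant.

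More generally, carry out each case rather than reading the constants off the table. For instance, your Case~3 via $l_1$ alone gives $\frac43$, not $\frac{19}{15}$. That is harmless here, but it shows the tabulated values do not come out automatically.
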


The lower bound of $2.2307$ of Epstein and Levy uses squares of side $\frac{1}{3}$ and hence does not apply here. Their lower bound of $2$ for dynamic $d$-dimensional cube packing (in particular squares)~\cite[Theorem~9]{EpsteinL10}, however, uses only squares of sides $\frac{1}{F}$ and $1$, where $F$ is a large integer; taking $F$ to be a power of two gives the following:

\begin{proposition}
    No algorithm has an asymptotic competitive ratio smaller than 2 for the dynamic packing of power-fraction squares.
\end{proposition}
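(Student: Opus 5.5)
The plan is to adapt the Epstein--Levy lower-bound construction for dynamic cube packing \cite[Theorem~9]{EpsteinL10} and check that it only uses squares of sides $\frac{1}{F}$ and $1$. Fix a large power of two $F=2^j$ and a large integer $n$. With $F=2^j$, every item used is a power-fraction square, so the construction is a valid input for this variant. Against any online algorithm $\mathcal{A}$, the adversary works in two phases.

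\textbf{Phase 1.} Release $nF^2$ tiny squares of side $\frac{1}{F}$. Their total area is $n$, so \Cref{lemma:pfarea} gives $\opt\le n$ at this point, because an offline packing can fill $n$ bins exactly. Let $N$ be the number of bins $\mathcal{A}$ uses. If $N\ge 2n$, the ratio is already at least $2$ and the adversary stops.

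\textbf{Phase 2 (when $N<2n$).} The adversary deletes tiny items so that every bin of $\mathcal{A}$ keeps at least one tiny item, while the surviving area is as small as possible. Concretely, exactly one tiny item survives in each of the $N$ bins. Since migration is forbidden, every one of these $N$ bins now contains an item and cannot receive a unit square. The adversary then releases $n$ squares of side $1$. Each must go to a fresh bin, so $\mathcal{A}$ has at least $N+n$ active bins.

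At that moment the offline optimum needs $n$ bins for the unit squares, plus bins for $N$ tiny items of total area $N/F^2$. Since $N<2n$, these fit in $\lceil 2n/F^2\rceil$ bins, so $\opt\le n+\lceil 2n/F^2\rceil$. Over the whole sequence $\opt$ is at most $\max\{n,\,n+\lceil 2n/F^2\rceil\}$.

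This handles the case $N$ close to $2n$, but not small $N$. When $N\approx n$ the bound $N+n\approx 2n$ still holds and $\opt\approx n$, so the ratio is again about $2$. In fact $N\ge n$ always, because the tiny items have area $n$. Hence in every case $\mathcal{A}(\sigma)\ge 2n$ while $\opt\le n(1+2/F^2)+1$. Letting $n\to\infty$ and then $F\to\infty$ (through powers of two) rules out any additive constant $b$, and shows no ratio below $2$ is achievable.

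The main obstacle is making sure the adversary's deletions are legal and that the offline optimum is computed correctly. Offline may repack arbitrarily, so in this model it can always achieve the area bound. For power-fraction squares, \Cref{lemma:pfarea} makes area the only constraint, which gives exactly the $\opt$ estimates above. I would verify that the original construction in \cite{EpsteinL10} needs nothing beyond this, and cite it rather than reprove it if its structure differs.
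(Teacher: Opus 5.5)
Your proposal is correct and follows the same route as the paper, which simply invokes the lower-bound construction of \cite[Theorem~9]{EpsteinL10}, notes that it uses only squares of sides $\frac{1}{F}$ and $1$, and takes $F$ to be a power of two. You go further by writing out that adversary explicitly, and your accounting holds up: $N\ge n$ by area, keeping one tiny square per bin forces $\mathcal{A}(\sigma)\ge N+n\ge 2n$, and $\opt\le n+\lceil 2n/F^2\rceil$ because the offline algorithm may repack (your paragraph about ``small $N$'' is muddled, but its conclusion $N+n\ge 2n$ is exactly what is needed).
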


\section{Conclusion}

We presented improved algorithms and analyses for dynamic bin packing of square items.  For arbitrary squares, the algorithm SMB has an asymptotic competitive ratio of at most $\frac{1003}{256} < 3.918$, improving the bound of $4.2154$ of Epstein and Levy~\cite{EpsteinL10}, which had remained the best-known bound for this case.  For unit-fraction and power-fraction squares we obtained $\frac{4811105}{1433976} < 3.356$ and $\frac{577087}{261120} < 2.211$, respectively. For the unit-fraction case, the previous best bound was that of Burcea, Wong and Yung~\cite{BWY2013}; for the power-fraction case, it was the bound of Han et al.~\cite{HAN2010} for one-dimensional unit-fraction items, which applies to power-fraction squares by \Cref{lemma:pfarea}. \Cref{tab:summary} summarizes the known and new results on the competitive ratios for the corresponding problems. In this table, 
\textbf{ub} (resp. \textbf{lb}) stands for \emph{upper bound} (resp. \emph{lower bound}).

\begin{table}[h]
\centering
\renewcommand{\arraystretch}{1.4}
\begin{tabular}{@{}lcccc@{}}
\toprule
\textbf{Item class} &  Old \textbf{ub} & New \textbf{ub} (this paper) &  \textbf{lb} (best known)  & New \textbf{ub}$/$\textbf{lb} \\
\midrule
Arbitrary squares      & $4.2154$~\cite{EpsteinL10} & $\frac{1003}{256} < 3.918$        & $2.2307$~\cite{EpsteinL10} & $ \approx 1.76$ \\
Unit-fraction squares  & $3.9654$~\cite{BWY2013} & $\frac{4811105}{1433976} < 3.356$ & $2.2307$~\cite{EpsteinL10} & $ \approx 1.50$ \\
Power-fraction squares & $2.4842$~\cite{HAN2010}   & $\frac{577087}{261120} < 2.211$   & $2$ \cite{EpsteinL10}    & $ \approx 1.11$ \\
\bottomrule
\end{tabular}
\caption{Previous and current competitive ratios.}
\label{tab:summary}
\end{table}

In our analysis, we showed that our area bounds in \Cref{sec:squares} and \Cref{sec:uf} are tight. Any improvement on the ratios above must therefore come from a different algorithm, or from a finer way of combining these bounds.  We conclude with two open problems.

\paragraph{Lower bounds on the SM and SMB algorithms.} While our analysis provided an upper bound, it is still open how tight this upper bound is. Would it be possible to provide a lower bound for our proposed algorithm in both the unit-fraction and the general-squares settings?

\paragraph{Lower bounds on the general case.} Our analysis showed that our algorithm can obtain a better bound when dealing with unit-fraction items. However, the best currently known lower bound uses only unit-fraction items. Would it be possible to design better lower bounds for the general case?

\bibliography{mybibs}  

\medskip

\hrule

\medskip

\appendix

\section{APPENDIX -- Power-Fraction Case Analysis}\label[appendix]{app:power-fraction-cases}

This appendix provides the detailed calculations for all cases in the power-fraction analysis presented in Section \ref{sec:pf}. Throughout, $c_2$ denotes an absolute constant whose value does not affect the asymptotic competitive ratio.

\subsection{Case 1: $k = 1$.}

Since the construction stops at $k = 1$, no bin with index smaller than $b_1$
contains an item of side smaller than $\frac{1}{r_1}$, and each of them
rejected the arriving item. Hence
\[
l_1 \geq \frac{1}{r_1^2} + (b_1 - 1)\fgamma{r_1}{r_1} \geq b_1 - 1
\implies l + 1 \geq b_1 .
\]

\subsection{Case 2: $k = 2,\ r_1 = 1$.}

\begin{align*}
\begin{split}
l_2 &\geq \frac{1}{r_2^2} + (b_2 - 1)\fgamma{r_2}{r_2} \\
&\geq (b_2 - 1) \implies l + 1 \geq b_2.
\end{split}
\end{align*}

and

\begin{align*}
\begin{split}
l_1 &\geq \frac{1}{r_1^2} + (b_1 - b_2 - 1)\fgamma{r_1}{r_1} + b_2\fgamma{r_2}{r_1} \\
&\geq b_1 - b_2 \implies l + b_2 \geq b_1.
\end{split}
\end{align*}

Combining the two inequalities yields $2l + 1 \geq b_1$.

\subsection{Case 3: $k \geq 2, r_1 \geq 2$.}

If $k = 2$,

\begin{align} \label{A31}
    l_2 & \geq \frac{1}{r_2^2} + (b_2 - 1)\fgamma{r_2}{r_2} 
    \geq (b_2 - 1)\left(1 - \frac{1}{r_2^2}\right)
    \geq (b_2 - 1)\frac{15}{16} \nonumber\\
    & \implies \frac{16}{15}l + 1 \geq b_2.
\end{align}

If $k \geq 3$,
\begin{align} \label{A32}
    l_2 & \geq \frac{1}{r_2^2} 
    + (b_2 - b_3 - 1)\fgamma{r_2}{r_2}
    + \sum_{p = 3}^{k - 1}(b_p - b_{p+1})\fgamma{r_p}{r_2}
    + b_k \fgamma{r_k}{r_2} \nonumber \\
    & \geq (b_2 - 1)\left(1 - \frac{1}{r_2^2}\right)
    \geq (b_2 - 1)\frac{15}{16}
    \implies \frac{16}{15}l + 1 \geq b_2.
\end{align}

\begin{align} \label{A33}
    l_1 & \geq \frac{1}{r_1^2} 
    + (b_1 - b_2 - 1)\fgamma{r_1}{r_1}
    + \sum_{p = 2}^{k - 1}(b_p - b_{p+1})\fgamma{r_p}{r_1}
    + b_k\fgamma{r_k}{r_1} \nonumber \\
    & \geq b_1 - b_2 - \frac{3}{4} + b_2 \cdot \frac{3}{4}
    \implies l + \frac{1}{4} b_2 + \frac{3}{4} \geq b_1.
\end{align}

Combining Equations \eqref{A32} or \eqref{A31} and \eqref{A33}, we obtain  
$l + \frac{1}{4} b_2 + \frac{3}{4} \geq b_1$,  
and hence $\frac{19}{15}l + c_2 \geq b_1$.

\subsection{Case 4: $k = 3, r_1 = 1, r_2 = 2$.}

\begin{align} \label{A41}
    l_3 & \geq \frac{1}{r_3^2} + (b_3 - 1)\fgamma{r_3}{r_3} 
    \geq (b_3 - 1) 
    \implies l + 1 \geq b_3.
\end{align}

\begin{align} \label{A42}
    l_2 & \geq \frac{1}{r_2^2} 
    + (b_2 - b_3 - 1)\fgamma{r_2}{r_2}
    + b_3 \fgamma{r_3}{r_2} \nonumber \\
    & \geq \frac{1}{4} + (b_2 - b_3 - 1) + b_3\frac{3}{4} \nonumber \\
    & = b_2 - \frac{1}{4}b_3 - \frac{3}{4}
    \implies l + \frac{1}{4}b_3 + \frac{3}{4} \geq b_2
    \implies \frac{5}{4}l + 1 \geq b_2.
\end{align}

\begin{align} \label{A43}
    l_1 & \geq 1 
    + (b_1 - b_2 - 1)\fgamma{1}{1}
    + (b_2 - b_3)\fgamma{2}{1}
    + b_3\fgamma{r_3}{1} \nonumber \\
    & \geq b_1 - \left(1 - \frac{1}{4}\right)b_2 - \frac{1}{4}b_3
    \implies l + \frac{3}{4}b_2 + \frac{1}{4}b_3 \geq b_1.
\end{align}

Combining Equations \eqref{A41}, \eqref{A42}, and \eqref{A43}, we obtain  
$\frac{35}{16}l + c_2 \geq b_1$.

\subsection{Case 5: $k \geq 3, r_1 = 1, r_2 \geq 4$.}

Again, we obtain

\begin{align} \label{A51}
    l_3 & \geq (b_3 - 1)\left(1 - \frac{1}{r_3^2}\right)
    \geq (b_3 - 1)\frac{63}{64}
    \implies \frac{64}{63}l + 1 \geq b_3.
\end{align}

\begin{align} \label{A52}
    l_2 & \geq \frac{1}{r_2^2} 
    + (b_2 - b_3 - 1)\fgamma{r_2}{r_2}
    + \sum_{p = 3}^{k-1}(b_p - b_{p+1})\fgamma{r_p}{r_2}
    + b_k\fgamma{r_k}{r_2} \nonumber\\
    & \geq b_2 - \frac{1}{16}b_3 - 1 
    \implies l + \frac{1}{16}b_3 + 1 \geq b_2 
    \implies \frac{67}{63}l + \frac{17}{16} \geq b_2.
\end{align}

\begin{align} \label{A53}
    l_1 & \geq 1 + (b_1 - b_2 - 1)\fgamma{1}{1}
    + \sum_{p = 2}^{k-1}(b_p - b_{p+1})\fgamma{r_p}{1}
    + b_k\fgamma{r_k}{1}\nonumber\\
    & \geq b_1 - \left(1 - \frac{1}{r_2^2}\right)b_2 - \frac{1}{r_2^2}b_3
    \geq b_1 - b_2 .
\end{align}

Combining Equations \eqref{A51}, \eqref{A52}, and \eqref{A53}, we obtain  
\[
\frac{130}{63}l + c_2 \geq b_1.
\]

\subsection{Case 6: $k = 4, r_1=1, r_2 = 2, r_3 =4$.}

\begin{align} \label{A61}
    l_4 \geq (b_4 - 1) \implies l + 1 \geq b_4.
\end{align}

\begin{align} \label{A62}
    l_3 & \geq \frac{1}{16} + (b_3 - b_4 - 1)\fgamma{r_3}{r_3} + b_4 \fgamma{r_4}{r_3} \geq b_3 - \frac{1}{16}b_4 - \frac{15}{16} \nonumber \\ & \implies \frac{17}{16}l + 1 \geq b_3.
\end{align}

\begin{align} \label{A63}
    l_2 & \geq \frac{1}{4} + (b_2 - b_3 - 1) \fgamma{r_2}{r_2} + (b_3 - b_4) \fgamma{r_3}{r_2} + b_4 \fgamma{r_4}{r_2} \nonumber\\
     & \geq b_2 - (1 - \frac{13}{16})b_3 - (\frac{13}{16} - \frac{3}{4})b_4 -\frac{3}{4} \implies l + \frac{3}{16}b_3 + \frac{1}{16}b_4 + \frac{3}{4} \geq b_2 \nonumber \\
     & \implies \frac{323}{256}l + 1 \geq b_2.
\end{align}

\begin{align} \label{A64}
    l_1 & \geq b_1 - b_2 + (b_2 - b_3) \fgamma{2}{1} + (b_3 - b_4) \fgamma{4}{1} + b_4 \fgamma{r_4}{1} \nonumber \\
    & \geq b_1 - (1 - \frac{1}{4})b_2 - (\frac{1}{4} - \frac{1}{16}) b_3 - (\frac{1}{16} - \frac{1}{r_4^2})b_4 \nonumber \\
    & \geq b_1 - (1 - \frac{1}{4})b_2 - (\frac{1}{4} - \frac{1}{16}) b_3 - \frac{1}{16}b_4.
\end{align}

Combining Equations \eqref{A61}, \eqref{A62}, \eqref{A63}, and \eqref{A64}, we obtain that $\frac{2261}{1024}l + c_2 \geq b_1$.

\subsection{Case 7: $k \geq 4, r_1 =1, r_2 = 2, r_3 \geq 8$.}

Similarly to what was done previously, we obtain

\begin{align} \label{A71}
    l_4 \geq (b_4 - 1) \frac{255}{256} \implies l\frac{256}{255} + 1 \geq b_4.
\end{align}

\begin{align} \label{A72}
    l_3 \geq (b_3 - b_4 - 1) + b_4(1 - \frac{1}{r_3^2}) \geq b_3 - \frac{1}{64}b_4 - 1 \implies \frac{259}{255}l + \frac{65}{64} \geq b_3 .
\end{align}

\begin{align} \label{A73}
    l_2 & \geq \frac{1}{r_2^2} + (b_2 - b_3 - 1) \fgamma{r_2}{r_2} + \sum_{p = 3}^{k-1} (b_p - b_{p + 1}) \fgamma{r_p}{r_2} + b_k \fgamma{r_k}{r_2} \nonumber \\
    & \geq (b_2 - b_3 - 1) + \frac{3}{4}b_3 = b_2 - \frac{1}{4}b_3 - 1 \implies \frac{1279}{1020}l + \frac{321}{256} \geq b_2.
\end{align}

\begin{align} \label{A74}
    l_1 & \geq b_1 - b_2 + (b_2 - b_3) \fgamma{r_2}{1} +   (b_3 - b_4) \fgamma{r_3}{1} \nonumber \\
    & \geq b_1 - (1 - \frac{1}{4})b_2 -\frac{1}{4}b_3. 
\end{align}

Combining Equations \eqref{A71}, \eqref{A72}, \eqref{A73}, and \eqref{A74}, we obtain that $\frac{8953}{4080}l + c_2 \geq b_1$.

\subsection{Case 8: $k \geq 5, r_1=1, r_2=2, r_3=4$.}

Again, we have that

\begin{align} \label{A81}
l_5 & \geq (b_5 - 1) \frac{255}{256} \implies \frac{256}{255}l + 1 \geq b_5.
\end{align}

\begin{align} \label{A82}
    l_4 & \geq (b_4 - b_5 - 1) \fgamma{r_4}{r_4} + \sum_{p = 5}^{k - 1} (b_p - b_{p + 1}) \fgamma{r_p}{r_4} + b_k \fgamma{r_k}{r_4} \nonumber \\
    & \geq b_4 - \frac{1}{r_4^2} b_5 - 1 \geq b_4 - \frac{1}{64}b_5 -1 \implies \frac{259}{255}l + \frac{65}{64} \geq b_4.
\end{align}

\begin{align} \label{A83}
    l_3 & \geq (b_3 - b_4 - 1) \fgamma{r_3}{r_3} + \sum_{p = 4}^{k - 1} (b_p - b_{p + 1}) \fgamma{r_p}{r_3} + b_k \fgamma{r_k}{r_3} \nonumber \\
    & \geq b_3 - \frac{1}{16}b_4 - 1 \implies \frac{4339}{4080}l + \frac{1089}{1024} \geq b_3.
\end{align}
    
\begin{align} \label{A84}
     l_2 & \geq (b_2 - b_3 - 1) \fgamma{r_2}{r_2} + \sum_{p = 3}^{k - 1} (b_p - b_{p + 1}) \fgamma{r_p}{r_2} + b_k \fgamma{r_k}{r_2} \nonumber \\
    & \geq b_2 - (1- \frac{13}{16})b_3 - (\frac{13}{16} - \frac{3}{4})b_4 -1 \implies \frac{82441}{65280} l + \frac{20691}{16384} \geq b_2.
\end{align}

\begin{align} \label{A85}
    l_1 & \geq b_1 - b_2 + (b_2 - b_3) \frac{1}{4} + (b_3 - b_4) \frac{1}{16} + (b_4 - b_5) \frac{1}{r_4^2} \nonumber \\
    &\geq b_1 - (1 - \frac{1}{4})b_2 - (\frac{1}{4} - \frac{1}{16})b_3 - \frac{1}{16}b_4.
\end{align}

Combining Equations \eqref{A81}, \eqref{A82}, \eqref{A83}, \eqref{A84}, and \eqref{A85}, we obtain that $\frac{577087}{261120}l + c_2 \geq b_1$.

\end{document}